\documentclass[acmsmall,nonacm]{acmart}

\PassOptionsToPackage{dvipsnames}{xcolor}
\usepackage{graphicx}
\usepackage{xcolor}
\usepackage[ruled,vlined,linesnumbered]{algorithm2e}
\usepackage{algpseudocode}
\usepackage{tikz}
\usetikzlibrary{calc,decorations.markings}
\usepackage{makecell}
\usepackage{quiver}
\usepackage{amsmath}
\usepackage{enumitem}
\usepackage{mathtools}
\usepackage{thm-restate}
\usepackage{cleveref}
\DeclareGraphicsExtensions{.png,.pdf,.svg}

\SetKw{Let}{let}
\SetKw{Break}{break}
\SetKwInput{KwInput}{Input}
\SetKwInput{KwShared}{Shared}
\SetKwInput{KwInitial}{Local}
\SetKwInput{KwPattern}{write-read pattern  }
\SetKwInput{KwImmediately}{immediately}
\SetKwProg{Fn}{Function}{:}{}

\newcommand{\mynote}[3]{
    \fbox{\bfseries\sffamily\scriptsize#1}
{\small$\blacktriangleright$\textsf{\emph{\color{#3}{#2}}}$\blacktriangleleft$}}
\newcommand{\ignore}[1]{}

\newboolean{editing}
\setboolean{editing}{true}

\ifthenelse{
        \boolean{editing}
    }{
    \newcommand{\ps}[1]{\mynote{Pierre}{#1}{red}}
    \newcommand{\pk}[1]{\mynote{Petr}{#1}{cyan}}
    \newcommand{\gt}[1]{\mynote{Guillermo}{#1}{teal}}    
    }
    {
    \newcommand{\ps}[1]{}
    \newcommand{\pk}[1]{}
    \newcommand{\gt}[1]{}    
    }

\newcommand{\True}{\text{True}}
\newcommand{\False}{\text{False}}

\newcommand{\calC}{\mathcal C}

\newcommand{\ops}[1]{\textit{ops}(#1)}

\newcommand{\ccdot}{\!\cdot\!}

\newcommand{\comp}[1]{\textit{comp}(#1)}

\newcommand{\ret}{\mathsf{ret}}
\newcommand{\st}{\mathsf{st}}
\newcommand{\gca}{\text{GCA}}
\newcommand{\gcapropose}[2]{\gca_{#1}\text{.propose}(#2)}
\newcommand{\propose}[1]{\textit{propose}(#1)}
\newcommand{\dtrace}{O^*\!/\!\sim}
\usepackage{varwidth}
\usepackage[most]{tcolorbox}
\newsavebox{\algohlbox}

\newtcolorbox{codehighlight}{
  colback=yellow!12,
  colframe=yellow!12,
  boxrule=0pt,
  arc=1pt,
  boxsep=0pt,
  left=2pt,right=2pt,top=1pt,bottom=1pt,
  before skip=0pt,
  after skip=0pt,
}

\title{Conflict-Freedom as a Progress Condition {\small (Extended Version)}}
\titlenote{This is an extended version of a conference article. The conference version of this article appears in the proceedings of the
2026 ACM Symposium on Principles of Distributed Computing (PODC’26), July 6–10, 2026, Egham, United Kingdom, \url{https://doi.org/10.1145/3796701.3815956}.}
\makeatletter
\def\@ACM@checkaffil{}
\makeatother

\author{Petr Kuznetsov}
\orcid{0000-0003-1148-1228}
\affiliation{%
\institution{T\'el\'ecom Paris, Institut Polytechnique de Paris}
}
\email{petr.kuznetsov@telecom-paris.fr}

\author{Pierre Sutra}
\email{pierre.sutra@telecom-sudparis.eu}
\orcid{0000-0002-0573-2572}
\affiliation{%
  \institution{T\'el\'ecom SudParis, Institut Polytechnique de Paris}
}

\author{Guillermo Toyos-Marfurt}
\email{guillermo.toyos@telecom-paris.fr}
\orcid{0000-0002-6830-1649}
\affiliation{%
  \institution{T\'el\'ecom Paris, T\'el\'ecom SudParis, Institut Polytechnique de Paris}
}

\begin{document}

\begin{abstract}
%
An obstruction-free implementation guarantees progress to every operation that is given enough time to take steps in isolation.
But, as we show in this paper, the mere presence of concurrent operations alone does not have to prevent progress; only incomplete \emph{conflicting} (non-commuting) operations may do so.
%
This progress condition, that we call \emph{conflict-freedom}, is a natural generalization of obstruction-freedom that promises efficient implementations for objects exhibiting semantic commutativity. 
%
We show that, as with obstruction-freedom, every sequential object has a read-write conflict-free linearizable implementation.   
Our conflict-free universal construction is based on a novel generalization of the instrumental commit-adopt object, interesting in its own right.
\end{abstract}

\maketitle


\section{Introduction}

A major theme in distributed computing is the interaction between \emph{safety} and \emph{liveness}~\cite{safety-liveness-lamport,safety-liveness-alpern}. 
Intuitively, a safety property imposes restrictions on the outputs that a distributed system is allowed to produce, and a liveness property (also called a \emph{progress condition}) specifies when the outputs \emph{must} be produced, despite process failures and imperfect communication. 
The desired levels of safety and liveness translate into inherent implementation costs and model assumptions, discovering these is in the very spotlight of theoretical distributed computing.

The gold standard of safety is \emph{linearizability}~\cite{HW90-lin,attiya2004distributed}: a concurrent implementation of a data structure creates the illusion of an \emph{atomic} object where each operation takes effect instantaneously at some indivisible moment between the invocation of the operation and its response.
The most attractive progress condition is \emph{wait-freedom}~\cite{Her91}: every operation invoked by a process must return in a finite number \emph{of its own steps}, i.e., regardless of the behavior of concurrent processes.

Ensuring both linearizability and wait-freedom is notoriously expensive and, for many popular data structures, even impossible~\cite{flp,LA87,Her91}. 
As safety is often considered to be indispensable, multiple weaker liveness conditions, such as starvation-freedom, deadlock-freedom, lock-freedom and obstruction-freedom~\cite{HS11,artmp}, have been considered in the literature. 
In particular, lock-freedom (originally known as \emph{nonblockingness}~\cite{Her91}) stipulates that in \emph{any run}, \emph{at least one} correct process makes progress.
Even though lock-freedom allows us to  do things that cannot be done wait-free~\cite{lf-wf}, it is still too strong for many other things. 

The weaker criterion of \emph{obstruction-freedom}~\cite{of} has gained momentum, as it enables a read-write \emph{universal construction}, i.e., a read-write linearizable implementation of any object, given its sequential specification. 
An obstruction-free algorithm ensures progress in runs that are \emph{eventually step-contention-free}, i.e., a single process that eventually runs in isolation is guaranteed to complete its operations on the implemented object.  
In this sense, obstruction-freedom is a \emph{scheduler-dependent} progress condition~\cite{HS11}: it guarantees \emph{optimistic} progress in the hope that the scheduler will give enough time to every contending process to run in isolation.

In this paper, we introduce a natural generalization of obstruction-freedom that accounts 
for \emph{conflicts}.
Indeed, it is commonplace that the main difficulty in distributed computing is the need to \emph{synchronize} concurrent data accesses, i.e., to resolve race conditions by using consensus or strong synchronization primitives~\cite{ofc}.  
It is often argued that conflicts are rare in practice~\cite{DavidG16}, as many objects inherently offer a lot of parallelism at the data level, e.g., by composing their states into disjoint components that can be accessed independently.    
One can therefore seek an optimistic progress condition that ensures progress in \emph{eventually conflict-free runs}, i.e., under the assumption that no conflict (i.e., no pair of non-commuting operations) is encountered by active processes from some point on.
%
We call this criterion \emph{conflict-freedom} and provide it in two flavors: conflict-freedom that makes sure that \emph{every} active process makes progress in eventually conflict-free runs and \emph{weak} conflict-freedom that ensures progress for one of them.

Herlihy and Shavit~\cite{HS11} proposed a classification of progress conditions, based on the guarantees they provide and  the requirements they impose on the environment (Table~\ref{tab:progress-table}).
%
Conflict-freedom can therefore be seen as a relaxation of wait-freedom; weak conflict-freedom, a relaxation of lock-freedom.
Conflict-freedom is to weak conflict-freedom what wait-freedom is to lock-freedom: the former guarantees progress to every correct process, whereas the latter guarantees progress to at least one.
Both properties are, in general, stronger than obstruction-freedom, as they also provide some progress in contended but conflict-free scenarios. In the extreme case, however, when every pair of operations conflicts, both properties coincide with obstruction-freedom. 
When all operations commute, conflict-freedom coincides with wait-freedom, and weak conflict-freedom with lock-freedom.
\begin{table}[htp]
     \begin{center}
  \begin{tabular}{||c|c|c|c|c||}
  \hline
   & \makecell[c]{independent \\ nonblocking} & \makecell[c]{\bf conflict-dependent\\ \bf nonblocking} & \makecell[c]{dependent \\ nonblocking} & \makecell[c]{dependent \\ blocking} \\
  \hline
\makecell[c]{maximal \\progress}  & wait-freedom & \bf conflict-freedom & obstruction-freedom & starvation-freedom  \\
\hline
\makecell[c]{minimal \\progress} & lock-freedom & \bf \makecell[c]{weak \\conflict-freedom} & ? & deadlock-freedom \\
\hline
 \end{tabular}
\end{center}
  \caption{``Periodic table'' of progress conditions~\cite{HS11}, extended. Informally, a property is \emph{nonblocking} if it guarantees progress even if some processes halt; a property is \emph{independent} if it guarantees progress, regardless of how the steps of the concurrent processes are scheduled; \emph{maximal progress} ensures that every process (in a selected group) makes progress, while \emph{minimal progress} only ensures progress for \emph{at least one} selected process. The ephemeral property of \emph{clash-freedom} (the ``Einsteinium of progress conditions''~\cite{HS11}) is marked with $?$ in the table, as it is not used in practice.}
     \label{tab:progress-table}
     \vspace{-1em}
\end{table}

 
In this paper, we describe read-write universal constructions that provide weak conflict-freedom and conflict-freedom.
Inspired by obstruction-free consensus algorithms based on the \emph{commit-adopt} abstraction~\cite{paxos,Gaf98}, we present our constructions in a modular way, using our novel \emph{generalized} commit-adopt (GCA) abstraction that accounts for conflicts.

Intuitively, GCA can be seen as an optimistic one-shot \emph{conflict resolver} that takes inputs from a set equipped with a binary \emph{compatibility relation} capturing conflicts, and returns a compatible set of outputs. 
Moreover, if the inputs are conflict-free, one of them is \emph{committed} and the other outputs will be \emph{extensions} of it.  
We present a read-write GCA implementation and show how a weakly conflict-free universal construction can be built by simply \emph{chaining} GCA objects: the outputs of one GCA object are used as inputs to the next one.
Finally, we present a conflict-free universal construction that combines the chaining idea of the weakly conflict-free construction with a helping mechanism.  

To sum up, our contributions are:
\begin{enumerate}
    \item the new progress criteria of weak conflict-freedom and conflict-freedom;   
    \item a new abstraction, generalized commit-adopt (GCA), interesting in its own right, and its read-write implementation;
    \item a simple weakly conflict-free universal construction from a chain of GCA objects; 
    \item a conflict-free universal construction using a chain of GCA objects and a read-write helping mechanism. 
\end{enumerate}

The rest of the paper is organized as follows. Section~\ref{sec:related} discusses related work. 
Section~\ref{sec:model-cf} recalls basic model definitions and introduces the notion of (weak) conflict-freedom. 
Section~\ref{sec:ucwcf} introduces the GCA abstraction and presents our weakly conflict-free universal construction.
%
%
In Section~\ref{sec:GCAimp}, we describe a read-write GCA implementation. 
In Section~\ref{sec:univconstructionv2}, we show how our weakly conflict-free universal construction can be turned into a conflict-free one using a helping mechanism.   
We conclude the paper with ramifications and open questions in Section~\ref{sec:conclusion}.  
Detailed proofs are delegated to Appendix~\ref{sec:proofs}.
%
%

\section{Related Work}
\label{sec:related}



The commit-adopt abstraction was introduced by Gafni~\cite{Gaf98}.
Herlihy~\cite{Her91} proposed the notion of universal construction and presented the first wait-free consensus-based algorithm for it.  
While consensus is necessary for wait-freedom, one can build an obstruction-free universal construction~\cite{of} that only guarantees progress in the absence of step contention.
%
%
The space and time complexity of obstruction-free algorithms remains an active research topic~\cite{ofc,Zhu21,Ovens24}. 

%

%

The idea of commutativity finds its formal foundation in the theory of Mazurkiewicz traces~\cite{aalbersberg1988theory, Mazurkiewicz84}.
Traces permit the generalization of universal constructions (UC) and state-machine replication (SMR), a distributed analogue of a (shared-memory) UC, to account for commutativity.  
%
%
%
%
Generalized Paxos~\cite{lamport2005generalized} is a seminal example of using traces (termed \emph{command histories}) and commutativity to build an SMR protocol allowing replicas to process commands in different orders, as long as they commute.
Generalized Paxos implements the \emph{ballot array}, a shared-memory abstraction that associates a monotonically growing trace with each process and ballot.
As in our work, Generalized Paxos uses least upper bounds (LUBs) and greatest lower bounds (GLBs) to merge entries in the ballot array.
In a system of $n=2f+1$ processes, if at most $f/2$ processes are faulty, the ballot array enables progress in (asynchronous) eventually conflict-free runs. 
%

Building on this idea, Egalitarian Paxos~\cite{moraru13epaxos, RyabininGS25} eliminates the leader role and constructs a dependency graph over conflicting operations, enabling decisions in fewer communication steps.
A general framework for dependency-based SMR protocols is presented in~\cite{rezende20leaderless}. 
%

A related abstraction is \emph{generic broadcast}~\cite{Schiper99Generic,Zielinski05Broadcast}. 
Unlike atomic broadcast~\cite{cristian1995atomic}, which enforces total order on all message deliveries, generic broadcast requires only conflicting messages to be ordered consistently, which makes this abstraction equivalent to generalized SMR. 
%
%

The observation that commutative operations can be executed in any order has motivated a wide range of optimizations (to name a few, \cite{Malek05QU,Cowling06HQ,Suri21Basil}), but designing practical systems that fully exploit commutativity remains challenging~\cite{whittaker2021sok}. 
Our work appears to be the first to formally define a \emph{semantics-aware} liveness  property that accounts for application-specific conflicts. 

\ignore{
Despite theoretical advances, designing practical SMR systems that fully exploit commutativity remain challenging~\cite{whittaker2021sok}.
Most systems adopt conservative definitions of commutativity, aligned with specific object semantics (e.g., disjoint read-write sets).
Early BFT SMR systems incorporating commutativity include~\cite{Cowling06HQ, Malek05QU}, which use object-level conflicts to reduce coordination.
These ideas are developed in~\cite{Suri21Basil},  which introduces fast-path execution using stronger quorum assumptions ($n = 5f + 1$).
%
}

\section{Conflict-Freedom} 
\label{sec:model-cf}

\subsection{Preliminaries}
\label{sec:preliminaries}

We assume the standard model of \citet{HW90-lin}, where processes communicate using shared memory to implement a linearizable object.
Below, we recall the main features of this model.

\paragraph{Objects.}
A (high-level) object is defined by a sequential data type given by a tuple $(Q,q_0,O,R,\sigma)$, where $Q$ is a set of \emph{states}, $q_0\in Q$ is the \emph{initial state}, $O$ is a set of \emph{operations}, $R$ is a set of \emph{responses}, and $\sigma:\;O\times Q \rightarrow R\times Q$ is a \emph{transition function} specifying the response and resulting state of each operation in each state.    
Given a state $q \in Q$, operations $a,b \in O$ \emph{commute in $q$} if there exist responses $r_a,r_b \in R$ and states $q_1,q_2,q' \in Q$ such that $\sigma(a,q)=(r_a,q_1)$, $ \sigma(b,q_1)=(r_b,q')$, $\sigma(b,q)=(r_b,q_2)$ and $\sigma(a,q_2)=(r_a,q')$.
That is, executing $a$ then $b$ from $q$ yields the same responses and final state as executing $b$ then $a$.
Commutativity induces a binary, symmetric, \emph{conflict} relation $\asymp\subseteq O\times O$, where operations $a,b \in O$ are \emph{conflicting}, written $a \asymp b$, if there exists a state $q \in Q$ in which $a$ and~$b$ do not commute.

\paragraph{Algorithms and executions.}
We consider an asynchronous shared-memory model with $n$ processes, $\Pi = \{i : 0 < i \leq n\}$, that communicate through atomic read-write registers, in order to implement a shared object.
Each process is assigned a deterministic \emph{algorithm}.
A \emph{step} of a process consists of a local computation followed by at most one
atomic (read or write) operation on a register. 
An \emph{execution fragment} is a (finite or infinite) sequence of steps
taken by processes according to their algorithms.
An \emph{execution} is an execution fragment that starts from the algorithm's initial configuration of each process.
In an infinite execution, a process is \emph{correct} if it either takes infinitely many steps or has no pending operation; otherwise it is faulty.
There can be any number of failures during an execution.

\paragraph{Histories and linearizability.}
%
Each execution induces a \emph{history}, which is a sequence of operation invocation and response events by the processes on the shared object.
An invocation of an operation $o$ on the shared object by a process $i$ creates an \emph{operation instance}, $\Phi = (o,i)$.
An invocation \emph{completes} if it has a matching response in the history; otherwise it is \emph{pending}.
%
%
Two operation instances are \emph{concurrent} if their execution intervals overlap, that is, neither completes before the other is invoked.
%
%
%
Our goal is to build a \emph{linearizable} implementation of an object that satisfies its sequential specification~\cite{HW90-lin,attiya2004distributed}.
Informally, a history is linearizable if all its complete operations and some of its incomplete ones can be assigned \emph{linearization points}, indivisible moments of time within their intervals, so that these operations put in the order of their linearization points constitute a correct sequential history of the implemented object.
An implementation is linearizable if every history it produces is linearizable.

\paragraph{Liveness.}
An implementation is \emph{wait-free} (resp. \emph{lock-free}) if in each of its infinite executions, \emph{every} (resp. \emph{at least one}) correct process completes each of its operations.
%
%
An operation instance $\Phi$ is \emph{eventually step-contention free} in an execution $\alpha$ if either $\Phi$ completes in $\alpha$, or there exists a suffix of $\alpha$ in which only
$\Phi$ takes steps (i.e., from some point on in $\alpha$, $\Phi$ runs solo).
An implementation is obstruction-free if in each of its infinite executions, any operation instance $\Phi$ invoked by a correct process completes in $\alpha$ whenever $\Phi$ is eventually step-contention free~\cite{of,ofc}. 

\subsection{Conflict-Free Progress: Definition} \label{sec:cf-def}

%
%
%

An infinite execution $\alpha$ is \emph{eventually conflict-free} if there exists a suffix of $\alpha$ in which no two concurrent operation instances $\Phi = (o,i)$ and $\Phi = (o',j)$ satisfy $o \asymp o'$.
Conflict-freedom extends obstruction-freedom by also ensuring progress in such executions:

\begin{definition}[Conflict-freedom]
An implementation is \textit{conflict-free} if, in every infinite execution, every operation invoked by a correct process completes whenever the operation is eventually step-contention-free \textbf{or} the execution is eventually conflict-free.
\end{definition}

\begin{definition}[Weak conflict-freedom]
An implementation is \textit{weakly conflict-free} if, (1) in every infinite execution, every operation invoked by a correct process completes whenever it is eventually step-contention-free, \textbf{and} (2) if the execution is eventually conflict-free, then \emph{some} correct process completes all of its operations.
\end{definition}

%

\begin{figure}
\centering
\resizebox{\columnwidth}{!}{%
\begin{tikzpicture}[x=0.88cm,y=1.22cm,
  proc/.style={font=\small,anchor=east},
  timeline/.style={thick,->},
  br/.style={font=\small,inner sep=0pt},
  lab/.style={font=\scriptsize,inner sep=0pt}
]

\def\xStart{0.05}
\def\xEnd{7.60}
\def\panelShift{7.35}

\pgfmathsetmacro{\dotRatio}{0.88/1.22}
\def\dotR{0.05}
\def\dotLW{0.6pt}

\newcommand{\op}[4]{%
  \node[br] at (#2,#1) {$[\,$};
  \node[lab,anchor=east] at (#2+0.155,#1+0.24) {#4};
  \ifx&#3&%
    \draw[thick] (#2+0.20,#1) -- (\xEnd-0.25,#1);
  \else
    \draw[thick] (#2+0.20,#1) -- (#3-0.20,#1);
    \node[br] at (#3,#1) {$\,]$};
  \fi
}

\newcommand{\uniformsteps}[4]{%
\pgfmathsetmacro{\step}{(#3-#2)/(#4+1)}
\foreach \k in {1,...,#4} {%
  \pgfmathsetmacro{\xx}{#2+\k*\step}
  \filldraw[line width=\dotLW] (\xx,#1) ellipse
    [x radius=\dotR, y radius=\dotR*\dotRatio];
}%
}

\def\lblX{-0.05}

\node[proc] at (\lblX,3) {$p_1$};
\node[proc] at (\lblX,2) {$p_2$};
\node[proc] at (\lblX,1) {$p_3$};

\pgfmathsetmacro{\midXBase}{(\xEnd + (\panelShift+\xStart))/2.0}
\def\sepNudge{0.275}
\pgfmathsetmacro{\midX}{\midXBase+\sepNudge}
\draw[thin] (\midX,0.65) -- (\midX,3.35);

\node[font=\scriptsize,anchor=east] at (\lblX,3.55) {(a)};
\node[font=\scriptsize,anchor=west] at (\midX+0.14,3.55) {(b)};

\begin{scope}
  \draw[timeline] (\xStart,3) -- (\xEnd,3);
  \draw[timeline] (\xStart,2) -- (\xEnd,2);
  \draw[timeline] (\xStart,1) -- (\xEnd,1);

  \op{3}{0.75}{2.35}{\textit{read}()}
  \op{3}{5.95}{7.10}{\textit{inc}()}

  \op{2}{1.25}{3.70}{\textit{inc}()}
  \op{2}{4.00}{6.55}{\textit{dec}()}

  \op{1}{1.05}{3.30}{\textit{dec}()}
  \op{1}{3.85}{5.05}{\textit{inc}()}
  \op{1}{5.35}{6.55}{\textit{dec}()}

  \draw[dashed,thick] (2.45,0.65) -- (2.45,3.35);

  \uniformsteps{3}{0.75}{2.35}{4}
  \uniformsteps{3}{5.95}{7.10}{3}

  \uniformsteps{2}{1.25}{3.70}{5}
  \uniformsteps{2}{4.00}{6.55}{5}

  \uniformsteps{1}{1.05}{3.30}{4}
  \uniformsteps{1}{3.85}{5.05}{3}
  \uniformsteps{1}{5.35}{6.55}{3}
\end{scope}

\begin{scope}[xshift=\panelShift cm]
  \draw[timeline] (\xStart,3) -- (\xEnd,3);
  \draw[timeline] (\xStart,2) -- (\xEnd,2);
  \draw[timeline] (\xStart,1) -- (\xEnd,1);

  \op{3}{0.75}{2.35}{\textit{read}()}
  \op{3}{5.20}{}{\textit{inc}()}

  \op{2}{1.25}{3.70}{\textit{inc}()}
  \op{2}{4.00}{5.55}{\textit{dec}()}
  \op{2}{5.95}{7.10}{\textit{inc}()}

  \draw[dashed,thick] (2.45,0.65) -- (2.45,3.35);

  \op{1}{3.25}{}{\textit{inc}()}

  \uniformsteps{3}{0.75}{2.35}{4}
  \uniformsteps{3}{5.20}{7.10}{4}

  \uniformsteps{2}{1.25}{3.70}{5}
  \uniformsteps{2}{4.00}{5.55}{4}
  \uniformsteps{2}{5.95}{7.10}{3}

  \uniformsteps{1}{3.25}{7.10}{5}
\end{scope}

\end{tikzpicture}%
}
\caption{
Executions on a shared counter under (a) conflict-freedom and (b) weak conflict-freedom.
Brackets denote invocations/completions and dots mark steps.
A read conflicts with an increment/decrement, whereas two updates commute.
After the dashed line no conflicting operations overlap.
In (a) all updates complete, whereas in (b) only $p_2$ makes progress and the increments of $p_1$ and $p_3$ remain pending.
}
\Description{
Executions on a shared counter under (a) conflict-freedom and (b) weak conflict-freedom.
Execution (a): $p_1$ invokes a read and then an increment; $p_2$ invokes an increment and then a decrement; $p_3$ invokes a decrement, an increment and then a decrement.
The read of $p_1$, the increment of $p_2$, and the decrement of $p_3$ are concurrent. 
No further conflicting operations overlap after these operations complete.
Execution (b): $p_1$ invokes a read and then an increment; $p_2$ invokes an increment, a decrement and then an increment; $p_3$ invokes an increment.
The read of $p_1$, and the first increment of $p_2$ are concurrent.
No further conflicting operations overlap after these operations complete.
In (a) all updates complete, whereas in (b) only $p_2$ makes progress and the increments of $p_1$ and $p_3$ remain pending.}
\label{fig:cf-vs-wcf}
\end{figure}

To build intuition, consider the executions in Figure~\ref{fig:cf-vs-wcf}, where three processes invoke operations on a shared counter storing an integer value and supporting $\textit{read}()$, $\textit{inc}()$ (increment), and $\textit{dec}()$ (decrement) operations.
Reads conflict with update operations ($\textit{inc}$ and $\textit{dec}$), whereas any two updates commute or any two reads commute.
In these executions, $p_1$ invokes a read operation that eventually completes and subsequently invokes increment operations, while $p_2$ and $p_3$ invoke arbitrarily many increments and decrements.
At some point, the executions turn conflict-free, although operations continue to take steps concurrently. 
Conflict-freedom guarantees that all updates complete in this setting, whereas weak conflict-freedom guarantees progress only for some process.
In an obstruction-free algorithm, no progress can be ensured here, as the execution never becomes step-contention-free.

\begin{restatable}{proposition}{ProgressHierarchy}
\label{prop:progress-hierarchy}
The progress conditions form the following hierarchy:
\\
$
\text{wait-freedom} \implies \text{conflict-freedom} \implies \text{weak conflict-freedom} \implies \text{obstruction-freedom}
$
\end{restatable}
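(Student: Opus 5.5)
The plan is to prove the three implications separately. Each one follows by unfolding the definitions of Section~\ref{sec:cf-def} and Section~\ref{sec:preliminaries}. Throughout, fix an implementation $A$ and an arbitrary infinite execution $\alpha$ of $A$.

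\emph{Wait-freedom implies conflict-freedom.} If $A$ is wait-free, then in $\alpha$ every correct process completes each of its operations. In particular, every operation invoked by a correct process completes, with no hypothesis needed. So the conclusion of the conflict-freedom definition holds for any disjunct of its premise, whether the operation is eventually step-contention-free or the execution is eventually conflict-free. The one subtlety is that wait-freedom is phrased per process and conflict-freedom per operation instance. I will note that ``completes each of its operations'' covers every instance invoked by that process.

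\emph{Conflict-freedom implies weak conflict-freedom.} Clause (1) of weak conflict-freedom is exactly the first disjunct of the conflict-freedom premise. For clause (2), suppose $\alpha$ is eventually conflict-free. Then every operation invoked by every correct process completes, so every correct process completes all of its operations. It remains to exhibit \emph{some} correct process, which requires the set of correct processes to be nonempty. This is the step I expect to need care. If every process is faulty, clause (2) is vacuous or ill-posed, and I will read ``some correct process completes all of its operations'' as quantifying over correct processes. In an infinite execution some process takes infinitely many steps, so at least one process is correct, because a process taking infinitely many steps is correct by definition. Hence a correct process exists, and clause (2) holds.

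\emph{Weak conflict-freedom implies obstruction-freedom.} Clause (1) of weak conflict-freedom states, verbatim, that every operation invoked by a correct process completes whenever it is eventually step-contention-free. That is precisely the obstruction-freedom definition, so this implication is immediate. Throughout, implications are meant as inclusions of the classes of implementations. If desired, I would add a remark that the implications are strict, for example by using the counter of Figure~\ref{fig:cf-vs-wcf}. That remark is not required by the statement.
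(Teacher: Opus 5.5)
Your proposal is correct and essentially matches the paper's proof. Like the paper, you fix an infinite execution and obtain each of the three implications by unfolding the definitions. Your additional remark that an infinite execution over finitely many processes always contains a correct process, so that clause (2) of weak conflict-freedom is not vacuous, is a point of care the paper leaves implicit.
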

\ignore{
\begin{proof}[Proof (Sketch)]
Wait-freedom guarantees completion of every operation in every infinite execution, and therefore implies conflict-freedom. 
Conflict-freedom implies weak conflict-freedom since it guarantees progress for every operation in eventually conflict-free executions, rather than for at least one process.
Finally, weak conflict-freedom ensures that every eventually step-contention-free operation completes, which is exactly the guarantee of obstruction-freedom.
\end{proof}
}

Conflict-freedom and weak conflict-freedom are parameterized by the conflict relation induced by the object. 
The following proposition characterizes the two degenerate conflict relations under which these notions reduce to classical progress conditions. 

\begin{restatable}{proposition}{CFDegenerate}
\label{prop:cf-degenerate}
Let $(Q,q_0,O,R,\sigma)$ be an object whose conflict relation is $\asymp$.
\begin{itemize}
    \item If $\asymp = O\!\times\!O$, then conflict-freedom and weak conflict-freedom are equivalent to obstruction-freedom.
    \item If $\asymp = \emptyset$, then conflict-freedom is equivalent to wait-freedom, and weak conflict-freedom is equivalent to lock-freedom.
\end{itemize}
\end{restatable}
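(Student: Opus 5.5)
The proof unfolds the definitions in each of the two degenerate cases, using Proposition~\ref{prop:progress-hierarchy} for the directions that hold in general.

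\emph{Case $\asymp = O\times O$.} By Proposition~\ref{prop:progress-hierarchy}, conflict-freedom and weak conflict-freedom both imply obstruction-freedom, so it suffices to show that obstruction-freedom implies conflict-freedom. Fix an obstruction-free implementation, an infinite execution $\alpha$, and an operation instance $\Phi$ of a correct process. If $\Phi$ is eventually step-contention-free, it completes by obstruction-freedom. The other case is that $\alpha$ is eventually conflict-free. Since every pair of operations conflicts, there is then a suffix of $\alpha$ in which no two operation instances are concurrent.

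The key step is to show that such a suffix forces $\Phi$ to be eventually step-contention-free, or already complete. Suppose $\Phi$ is pending in that suffix. Then every other instance that takes a step there is concurrent with $\Phi$, which is excluded. So only $\Phi$ takes steps in the suffix. Hence $\Phi$ is eventually step-contention-free and completes by obstruction-freedom. Therefore conflict-freedom holds, and the chain conflict-freedom $\Rightarrow$ weak conflict-freedom $\Rightarrow$ obstruction-freedom closes the equivalence.

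The main obstacle is the reading of ``concurrent'' for an instance invoked inside the suffix while $\Phi$ is pending: it must count as concurrent with $\Phi$. Under the interval-overlap definition of Section~\ref{sec:preliminaries} it does, since neither instance completes before the other is invoked. I would also note that steps of processes with no pending operation are not steps of any instance, so they do not break step-contention-freedom.

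\emph{Case $\asymp = \emptyset$.} Every infinite execution is trivially eventually conflict-free, because there are no conflicting pairs. Conflict-freedom then requires every operation of every correct process to complete in every infinite execution, which is exactly wait-freedom. For the converse, wait-freedom implies conflict-freedom by Proposition~\ref{prop:progress-hierarchy}.

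For weak conflict-freedom, clause (2) always applies, so some correct process completes all of its operations in every infinite execution, which is lock-freedom. For the converse, assume the implementation is lock-free. Clause (2) of weak conflict-freedom is then immediate.

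Clause (1) needs care, since lock-freedom alone does not obviously cover an operation that runs solo. Take an eventually step-contention-free pending $\Phi$ of a correct process. In the solo suffix, $\Phi$'s process is the only one with a pending operation that takes steps. Every other correct process has no pending operation, since a pending operation of a correct process would require it to take infinitely many steps. I would therefore argue that the correct process guaranteed to make progress by lock-freedom must be $\Phi$'s, so $\Phi$ completes. The subtle point is that processes with no pending operations are vacuously ``completing all of their operations''. I would interpret lock-freedom in the standard way, as progress by some process that has pending operations, and state this interpretation explicitly.
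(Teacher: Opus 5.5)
Your proposal is correct and follows the paper's proof essentially step for step. In the $O\times O$ case you show that an eventually conflict-free execution makes every pending instance eventually step-contention-free, and in the $\emptyset$ case you use that every execution is eventually conflict-free, proving clause (1) of weak conflict-freedom from lock-freedom by the same solo-suffix argument. You also point out that the paper's literal definition of lock-freedom is satisfied vacuously by an idle correct process, so it must be read as progress by some process with a pending operation; the paper relies on this reading without stating it.
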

%
%
%

\section{A Conflict-Free Universal Construction} \label{sec:ucwcf}
We show that every sequential object admits a read–write implementation that is both linearizable and conflict-free.
Universal constructions transform a sequential object specification into a linearizable shared implementation satisfying a given progress condition~\cite{Her91}. 
Wait-free and lock-free read–write universal constructions are impossible~\cite{flp,LA87,Her91}, whereas obstruction-free constructions are known to exist~\cite{of,ofc}.

The main challenge in implementing conflict-freedom is to guarantee progress without imposing a total order on concurrent operations. 
In eventually conflict-free executions, non-conflicting operations must be able to complete without agreeing on a relative order, whereas conflicting operations must still be ordered consistently to preserve linearizability.
We address this challenge using \textit{traces}~\cite{aalbersberg1988theory,Mazurkiewicz84}, which encode the ordering constraints induced by conflicts; we begin by introducing the trace model underlying the construction.

\subsection{Schedules and Traces} \label{sec:traces}

A \textit{schedule} is a finite sequence of operations $s \in O^*$.
Let $\epsilon$ denote the empty schedule.
For $o \in O$, let $|s|_o$ denote the multiplicity of $o$ in $s$; for  $i \in [1,|s|_o]$, let $o^{(i)}$ denote its $i$-th occurrence.
Let $\ops s$ denote the multiset over $O$ defined by $s$.
Let $\prec_s$ be the total order on occurrences induced by $s$.
For $k \in [1,|s|]$, we write $s[k]$ for the $k$-th operation in $s$.
Given schedules $x,y\in O^*$, let $x\cdot y$ denote their concatenation.

%
%
We say that two schedules $s,s' \in O^*$ are \emph{equivalent}, and write $s \sim s'$ if:
\begin{enumerate}
\item[(i)] $\ops{s}=\ops{s'}$ and  
\item[(ii)] For every pair of occurrences $a^{(i)},b^{(j)}$ in $s$, if $a \asymp b$ and $a^{(i)}\!\prec_s\! b^{(j)}$, then $a^{(i)}\!\prec_{s'}\!b^{(j)}$. 
\end{enumerate}
Informally, two schedules are considered equivalent if they contain the same operations, and any precedence between conflicting operation occurrences is preserved.

The relation $\sim$ is symmetric, reflexive, and transitive and, thus, induces a \emph{quotient space} on $O^*$, denoted $O^*/\sim$ (i.e., a set of equivalence classes in $O^*$).
The equivalence class of a schedule $s$ is denoted by $[s]$ and we call it a \emph{trace}.
Given two schedules, $s$ and $s'$, we define trace concatenation as $[s] \cdot [s'] = [s \cdot s']$.
We can verify that $\sim$ is a congruence relation over the monoid $(O^*,\cdot)$.
As such, concatenation is well-defined over traces and $(O^*/\sim,\cdot)$ is a quotient monoid.

This construction corresponds to Mazurkiewicz traces~\cite{aalbersberg1988theory,Mazurkiewicz84}.
Our conflict relation acts as a dependency relation, and equivalence classes of schedules correspond to traces, where commuting operations may be freely reordered.

A schedule $s$ is a representative of a trace $t$ if $[s]=t$; we write $s \in t$.
Let $\epsilon$ denote the empty trace.
The length of a trace $t$, denoted $|t|$ is the length of its representatives. 
All subsequent definitions and operations apply equally to schedules and traces.
%
We say that $x$ is \emph{a prefix of} $y$ (or that $y$ is \emph{an extension of} $x$), and write $x \leq y$ if there exists $z$ such that $y = x \cdot z$. 
If $z \neq \epsilon$, we say that $x$ is a strict prefix of $y$, written $x < y$.
Given a set $S$, we write the \emph{greatest lower bound (GLB) of} $S$, denoted $\bigsqcap S$, as the \emph{largest} $e$ such that $e \leq x$ for all $x \in S$.
Note that $\bigsqcap S$ can be the empty trace (schedule) $\epsilon$. 
We say that a set $S$ is \emph{comparable} if for all $x,y \in S$, either $x \leq y$ or $y \leq x$. 
A set $S$ is \emph{compatible}, denoted $\comp{S}$, if there exists $z$ such that for all $x \in S$, $x \leq z$. 
For a compatible set $S$, we define the \emph{least upper bound (LUB) of} $S$, denoted $\bigsqcup S$, as the \emph{smallest} $e$ such that $\forall x \in S, x \leq e$.
It is shown in~\cite{aalbersberg1988theory} that every set of traces admits a greatest lower bound (GLB), and that a least upper bound exists whenever the set is compatible.
Moreover, these bounds are built using the operations in $S$.
That is, $\ops{\bigsqcap S} \subseteq \bigcap_{s \in S} \ops{s}$  and $\ops{\bigsqcup S} \subseteq \bigcup_{s \in S} \ops{s}$.
The prefix order makes $(O^*/\sim,\leq)$ a \textit{bounded complete meet-semilattice}.
The GCA abstraction (Section~\ref{sec:gca}) is defined over this structure.

Traces represent the order in which operations are executed in the state machine.
We lift $\sigma$ to schedules by defining $\sigma^* : O^* \times Q \to (R \times Q)^*$ inductively:
\[
\begin{aligned}
&\sigma^*(\epsilon, q) = \epsilon, \\
&\sigma^*(o \ccdot s, q) = (r, q') \cdot \sigma^*(s, q'),
\qquad \text{where } (r,q') = \sigma(o,q).
\end{aligned}
\]

Schedules are interpreted starting from the initial state $q_0$.
Let $o^{(i)}$ denote the $i$-th occurrence of $o$ in a schedule $s \in O^*$. 
We define the return value of $o^{(i)}$ in $s$ as $\ret^*(o^{(i)},s) = \ret(\sigma^*(s,q_0)[k])$, where $s[k]$ is the $i$-th occurrence of $o^{(i)}$ in $s$.

\begin{restatable}{lemma}{RetTraceEq}
\label{lemma:ret-trace-eq}
Let $s,t\in O^*$ be two schedules such that $s\sim t$. Then for any occurrence of an operation $o^{(i)}$ in $s$, $ \ret^*(o^{(i)},s)=\ret^*(o^{(i)},t)$.
\end{restatable}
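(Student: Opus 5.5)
The plan is to reduce the statement to the case of a single adjacent transposition of commuting operations, and then handle that case with the commutativity definition from Section~\ref{sec:preliminaries}.

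\textbf{Step 1: reduction to adjacent swaps.} We first show that if $s \sim t$, then $t$ is reachable from $s$ by a finite sequence of swaps of adjacent, non-conflicting operations. Each swap preserves the identity of occurrences $o^{(i)}$, since the $i$-th occurrence of $o$ stays the $i$-th: swapping two distinct operations never reorders two occurrences of the same operation. The argument is by induction on $|s|$.
\begin{itemize}
\item Let $a = t[1]$, and let $a^{(1)}$ sit at position $k$ in $s$.
\item Every occurrence $b^{(j)}$ before position $k$ in $s$ does not conflict with $a$. Otherwise, condition (ii) forces $b^{(j)} \prec_t a^{(1)}$, contradicting that $a^{(1)}$ is first in $t$.
\item We may also assume $b\neq a$, because $a^{(1)}$ is the first occurrence of $a$ in $s$. (If $a \asymp a$, ordering of occurrences of $a$ is preserved anyway.)
\item So $a^{(1)}$ can be bubbled to the front of $s$ by adjacent non-conflicting swaps, giving $s' = a\cdot s''$ with $s'\sim s$.
\item By transitivity, $s' \sim t$, and stripping the first letter gives $s''\sim t''$, where $t=a\cdot t''$. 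Condition (i) holds trivially, and condition (ii) is inherited, since occurrence indices shift uniformly for $a$.
\item Apply induction to $s''$ and $t''$.
\end{itemize}

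\textbf{Step 2: the single-swap case.} Let $s = x\cdot a\cdot b\cdot y$ and $s' = x\cdot b\cdot a\cdot y$ with $\neg(a\asymp b)$. Let $q$ be the state reached by $\sigma^*(x,q_0)$. Since $a$ and $b$ do not conflict, they commute in every state, in particular in $q$. By the definition of commuting:
\begin{itemize}
\item $a$ returns the same $r_a$ in both orders, and $b$ returns the same $r_b$;
\item both orders end in the same state $q'$.
\end{itemize}
Hence:
\begin{itemize}
\item the responses along $x$ are identical in $s$ and $s'$, since they share the same prefix from $q_0$;
\item the responses along $y$ are identical, since both continue deterministically from $q'$.
\end{itemize}
Because occurrence labels are preserved by the swap, each $o^{(i)}$ gets the same return value under $\ret^*$ in $s$ and $s'$. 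Chaining over the sequence of swaps from Step~1 yields the lemma.

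\textbf{Main obstacle.} The delicate part is Step~1, specifically verifying that the reduced pair $s'', t''$ still satisfies (ii) with correctly re-indexed occurrences. I would handle this by tracking occurrences as pairs (operation, index) and noting that removing the first $a$ decrements all $a$-indices uniformly in both schedules. Alternatively, one could cite the standard Mazurkiewicz result that trace equivalence is the congruence generated by swaps of independent adjacent letters~\cite{aalbersberg1988theory}.
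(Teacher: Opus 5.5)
Your proof is correct and takes the same route as the paper. Both reduce to a single transposition of adjacent non-conflicting occurrences, then use commutativity in the state reached after the common prefix $x$: the swapped pair returns the same responses in both orders and both orders reach the same state. The one difference is that you prove the reduction to adjacent swaps yourself with the bubble-to-front induction, which is sound, including the uniform re-indexing of $a$-occurrences when you strip the first letter; the paper simply invokes this fact from Mazurkiewicz trace theory. Your version is a worthwhile addition, because the paper defines $\sim$ through preserved orders of conflicting occurrences rather than directly as the swap congruence.
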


Lemma~\ref{lemma:ret-trace-eq} implies that the return value of every operation occurrence is invariant within an equivalence class.
Accordingly, we overload $\sigma^*$ to traces by defining $\sigma^*([s],q)=\sigma^*(\hat s,q)$ for any representative $\hat s \in [s]$; this is well-defined.
%

\begin{restatable}{lemma}{RetTracePrefix}
\label{lemma:ret-trace-prefix}
Let $s,t \in O^*/\!\sim$, if $s \leq t$ then for every occurrence $o^{(i)}$ in $s$, $\ret^*(o^{(i)},s) = \ret^*(o^{(i)},t)$.
\end{restatable}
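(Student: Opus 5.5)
Since $s \leq t$ in the trace monoid, there is a trace $z$ with $t = s \cdot z$. By definition of trace concatenation, $[\hat s]\cdot[\hat z] = [\hat s\cdot \hat z]$ for any representatives $\hat s \in s$ and $\hat z \in z$. So $\hat t := \hat s \cdot \hat z$ is a representative of $t$. Lemma~\ref{lemma:ret-trace-eq} makes $\ret^*$ invariant within a trace, so it suffices to prove the claim for the schedules $\hat s$ and $\hat t$. This reduces the trace-level statement to a statement about a schedule and its literal prefix.

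The first point to settle is which occurrence of $o$ in $\hat t$ corresponds to $o^{(i)}$ in $\hat s$. Since $\hat s$ is a literal prefix of $\hat t$, the $i$-th occurrence of $o$ in $\hat t$ sits at the same position $k$ as in $\hat s$, for $i \le |\hat s|_o$. Occurrences are indexed by their count from the left, and the first $|\hat s|$ symbols of $\hat t$ are exactly those of $\hat s$.

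Next I show, by induction on the length of $\hat s$, that $\sigma^*(\hat s\cdot \hat z,q)$ is the sequence $\sigma^*(\hat s,q)$ followed by $\sigma^*(\hat z,q')$, where $q'$ is the state reached after executing $\hat s$ from $q$ (with $q'=q$ if $\hat s=\epsilon$).
\begin{itemize}
\item Base case: $\hat s=\epsilon$, and the claim is immediate.
\item Inductive step: write $\hat s = o\cdot \hat s'$ and unfold the definition $\sigma^*(o\cdot x,q)=(r,q_1)\cdot\sigma^*(x,q_1)$.
\end{itemize}
Consequently, the $k$-th entry of $\sigma^*(\hat t,q_0)$ equals the $k$-th entry of $\sigma^*(\hat s,q_0)$ for every $k\le|\hat s|$. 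In particular $\ret^*(o^{(i)},\hat s)=\ret^*(o^{(i)},\hat t)$.

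Finally I transfer back to traces. Lemma~\ref{lemma:ret-trace-eq} gives $\ret^*(o^{(i)},s)=\ret^*(o^{(i)},\hat s)$ and $\ret^*(o^{(i)},t)=\ret^*(o^{(i)},\hat t)$, which yields the claim.

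The main obstacle I expect is the bookkeeping of the occurrence $o^{(i)}$ across representatives. Occurrence labels must be matched consistently when passing from $s$ to $\hat s$ and from $t$ to $\hat t$. Lemma~\ref{lemma:ret-trace-eq} already handles this: occurrence $o^{(i)}$ means the $i$-th $o$, and equivalence preserves the relative order of equal operations, since $o \asymp o$ or the order is irrelevant. I would invoke it as stated, and only remark that the well-definedness of $\sigma^*$ on traces covers this point.
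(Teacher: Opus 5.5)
Your proof is correct and follows the same approach as the paper: write $t = s\cdot z$, observe that $\sigma^*(s,q_0)$ is a prefix of $\sigma^*$ of the concatenation, and read off the return value at the same position. You are more explicit than the paper in fixing the representative $\hat t = \hat s\cdot\hat z$ and using Lemma~\ref{lemma:ret-trace-eq} to pass between traces and their representatives, which is what the paper's phrase ``by definition of $\sigma^*$ on traces'' implicitly relies on.
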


To illustrate this claim, consider a counter initialized to $0$ and the trace $t=[inc \ccdot dec \ccdot read]$.
We have that $\ret^*(read^{(1)},t)=0$. 
Since increments and decrements commute, every representative of the trace $t$ yields the same return value for the read.
Now consider an extension $t' = t \ccdot [inc \ccdot inc \ccdot read]$.
Although additional operations are appended, the return value of the first read remains unchanged:
$\ret^*(read^{(1)}, t) = \ret^*(read^{(1)}, t') = 1$.
Thus, return values are invariant both across representatives of a trace and under trace extension, as captured by Lemmas~\ref{lemma:ret-trace-eq} and~\ref{lemma:ret-trace-prefix}.
Proofs for these two lemmas are in Appendix~\ref{sec:proofs}.

\subsection{Generalized Commit-Adopt} \label{sec:gca}
We introduce \emph{Generalized Commit-Adopt} (GCA), the main synchronization primitive underlying our universal construction.
GCA takes a trace as an input and returns, as an output, a trace with a boolean value that indicates whether the returned trace is \emph{committed} or just \emph{adopted}.
%
If all input traces are compatible, then an extension of at least one of them is \emph{committed}.
Moreover, if a trace is committed,  then all returned traces extend it.

\ignore{
Informally, each process proposes an element, and every returned element extends the committed one.
When the proposals are compatible, at least one process commits to an extension of its proposal, while the others adopt elements that extend it.
Committed elements thus act as synchronization points from which executions can safely progress.
We now formalize the abstraction.
}

Formally, GCA is a one-shot object defined over the trace domain $(\dtrace,\leq)$ induced by the conflict relation of the sequential specification.
%
%
The object exports a single operation $\textit{propose}\!:\! {(\dtrace)} \rightarrow {(\dtrace)} \times \{\True,\False\}$.
Each process $i$ invokes $\propose{s_i}$, where $s_i \in \dtrace$, and outputs $(t_i, c_i)$, where $t_i \in \dtrace$ and $c_i \in \{\True, \False\}$ indicates whether process $i$ \emph{commits} $t_i$ or \emph{adopts} $t_i$, respectively.

We say that a process is a \emph{participant} if it invokes the $\propose{}$ operation. 
The abstraction guarantees that every correct participant eventually returns. 
Let $P$ denote the set of participants, and let $P_r \subseteq P$ be the participants whose calls return in a given execution.
The following properties hold for every execution:

\begin{enumerate}
    \item \textbf{Validity.} Output traces contain only input operations.
    $\forall i \in P_r,~ ops(t_i) \subseteq \bigcup_{j\in P} ops(s_j)$
    \item \textbf{Adoption.} A committed trace is extended by every output trace.\\
    $\forall i \in P_r,~ c_i=\True \implies \forall j \in P_r, t_i \leq t_j$
    \item \textbf{Commitment.} If all input traces are compatible, then some process commits an extension of its input trace.  $\comp{\bigcup_{i\in P}\{s_i\}} \land P=P_r\! \implies\! \exists j\in P_r, (s_j\!\leq\!t_j) \land (c_j\!=\!\True)$

    \item \textbf{Convergence.} Output traces are mutually compatible.
    $\comp{\bigcup_{i \in P_r}\{t_i\}}$

    \item \textbf{Common prefix.} Output traces preserve the common prefix of the input traces. \\
    $\forall i \in P_r,~ \sqcap_{j\in P}{~s_j} \leq t_i$
    
    \item \textbf{Weak Agreement.} If all input traces are equal, then no process adopts. \\
    $\forall i,j \in P, s_i=s_j \implies \forall i \in P_r : c_i = \True$.
\end{enumerate}

These properties guarantee that the output traces are compatible, extend any common prefix, and are built from the content of the input traces. 
Notice also that, by Weak Agreement, if all processes input the same trace, they must commit it.

At the heart of GCA is the interplay between Commitment and Adoption.
Commitment requires that whenever input traces are compatible and every process returns from its call, some process commits to an extension of its input.
Adoption forces all output traces to extend any committed one, thereby establishing a common synchronization point.

GCA lifts the standard commit-adopt (CA) abstraction~\cite{Gaf98} from values to traces.
In CA, processes propose values from a set $V$ and return a pair $(c,v)$, where $c\in\{\True,\False\}$ denotes commit or adopt and $v \in V$ is the returned value.
CA guarantees that if some process commits a value $v$, then every process commits or adopts $v$.
This corresponds to a special case of GCA with $O=V$ and the conflict relation $\asymp=V\!\times\!V$.
In detail, processes propose singleton traces $[v]$.
Since no operations commute, if some process commits a trace $t \neq \epsilon$, then by Adoption every output trace extends $t$, thereby ensuring that all processes agree on its first operation.
Given a GCA response $(t,c)$, a process returns the first operation in $t$ and $c$; if $t=\epsilon$ the process adopts its own input.

In Section~\ref{sec:GCAimp}, we describe a read-write GCA implementation. 

\subsection{Weakly Conflict-Free Universal Construction Using GCA} \label{sec:weak_uc}

Our weakly conflict-free universal construction (Algorithm~\ref{alg:UnivConstructionGCA}) follows a classical pattern, where state transitions of the implemented sequential object are computed using a series of ``agreement'' objects.
%
Intuitively, every newly invoked operation is proposed to the ``next'' agreement object in order.
If the operation is ``committed'' by the agreement object, it determines the next transition of the implemented object.
Otherwise, the operation is proposed again to the next agreement object.
We extend this pattern, by using GCAs as agreement objects and submitting growing traces of commands instead of individual operations.

\begin{algorithm}
\DontPrintSemicolon
\SetAlgoLined
\KwInput{$A= (Q,q_0,O,R,\sigma)$ \hfill $\triangleright$ Sequential object specification.}
\KwShared{$\{\gca_k(C^*\!/\!\sim,\leq) : k \geq 1\}$ \hfill $\triangleright$ Sequence of GCA objects.}
\KwShared{$S[1,\ldots, n] = (0,\epsilon)$ \hfill $\triangleright$ Array of atomic registers.}
\KwInitial{$r = 0;\; seq=0;\; s=\bot$}
\Fn{invoke(op)}{
    \Let $c = \False$ \label{line:uc:1} \\
    $seq \gets seq + 1$ \label{line:uc:2} \\
    $cmd \gets (op,i,seq)$ \label{line:uc:3} \\
    $(r,s) \gets \max_{r'} \{(r',s')=S[j] : j \in [1,n]\}$ \label{line:uc:4} \\
    \While{$(cmd \not\in \ops s) \lor (c=\False)$ \label{line:uc:5}}{
        $r \gets r+1$ \label{line:uc:6} \\
        \If{$cmd \notin \ops s$ \label{line:uc:7}}{
            $s \gets s \cdot cmd$ \label{line:uc:8} \\
        }
        $(s,c) \gets \gcapropose{r}{s}$ \label{line:uc:9} \\
    }
    $S[i] \gets (r,s)$ \label{line:uc:10} \\
    \Return\ $\ret^*(cmd,s)$ \label{line:uc:11} \\
}
\caption{Weakly conflict-free universal construction algorithm. Code for process $i$}
\label{alg:UnivConstructionGCA}
\end{algorithm}

\paragraph{Overview.}
%
%
%
Algorithm~\ref{alg:UnivConstructionGCA} proceeds in rounds, using a distinct GCA instance $\text{GCA}_r$ at each round $r \geq 1$, together with a shared array $S[1,\ldots,n]$ of SWMR atomic registers.
For a process $i$, entry $S[i]$ stores the latest trace committed by $i$ on the universal construction.
It consists of a pair $(r,s)$, where $r$ is a round number and $s$ is the trace committed by $i$ at that round.

The construction takes, as an input, a sequential object $A=(Q,q_0,O,R,\sigma)$.
Upon invoking an operation $op$, a process $i$ creates a matching \emph{command}, that is, a tuple $(op,i,seq)$ where $seq$ is a local sequence number.
Commands are uniquely identifiable and inherit the conflict relation of their underlying operations; they are used to instantiate GCA.
In detail, $A_c=(Q,q_0,C,R,\sigma')$ is the object derived from $A$, with
$C=\{(op,i,seq) : op\in O,\, i\in \Pi,\, seq\in\mathbb{N}\}$ and
$\sigma'((op,i,seq),q)=\sigma(op,q)$.
The meet-semilattice $(C^*\!/\!\sim,\leq)$ induced by $A_c$ parametrizes every instance $\text{GCA}_k$.
Notice that, in the definition above, the transition function ignores the process identifier and sequence number: executing commands is identical to executing their underlying operations.

To invoke a new operation $op$, process $i$ creates a unique command $cmd$ by combining $op$ with its identifier and a local sequence number (Line~\ref{line:uc:3}).
Then, process $i$ reads the shared array $S$ and takes the pair $(r,s)$ with the largest round number, where $s$ is a committed trace (Line~\ref{line:uc:4}).
Starting from this trace, the process proposes successive extensions that append $cmd$ when absent, always using the trace returned by the previous $\gca$ invocation. 
This loops until a committed trace containing $cmd$ is obtained (Line~\ref{line:uc:5}).
At this point, the process records the pair $(r,s)$ in $S[i]$ and returns the response of $cmd$ in $s$ (Line~\ref{line:uc:11}).

\paragraph{Example.}
To illustrate the above construction, let us go back to Figure~\ref{fig:cf-vs-wcf}(a).
In this figure, processes $p_1$, $p_2$, and $p_3$ invoke operations $read()$, $inc()$, and $dec()$, respectively.
Suppose that they use the universal construction in Algorithm~\ref{alg:UnivConstructionGCA} to implement the shared counter.
A possible execution is that $p_1$ and $p_2$ propose concurrently the traces $u=(read(),1,1)$ and $v=(inc(),2,1)$ to $\text{GCA}_1$.
In that case, since $u$ and $v$ are not compatible, both processes may adopt $v$ (or $\epsilon$).
Now suppose $p_1$ gets stalled whereas $p_3$ invokes its operation.
Then, if $p_3$ is scheduled first, it accesses $\gca_1$, adopts $v$, appends $(dec(),3,1)$ to it, and proposes the resulting trace to $\text{GCA}_2$.
After that, $p_2$ proposes $v$ to $\text{GCA}_2$.
Since the two input traces of $\text{GCA}_2$ are compatible, at least one process commits, observes its command included in the committed trace, and returns from its invocation.

The following theorems establish the correctness of Algorithm~\ref{alg:UnivConstructionGCA}. 
Sketches of their proofs are presented below, with complete details deferred to Appendix~\ref{sec:appendix:weak_uc}.

\begin{restatable}{theorem}{WeakUCLin}
\label{theorem:weakUCLin}
    Algorithm~\ref{alg:UnivConstructionGCA} implements a linearizable universal construction.
\end{restatable}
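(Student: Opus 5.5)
The plan is to show that all traces \emph{committed} in any execution of Algorithm~\ref{alg:UnivConstructionGCA} are totally ordered by $\leq$. We then linearize each completed command according to a representative of the LUB of these traces, and read off return values using Lemmas~\ref{lemma:ret-trace-eq} and~\ref{lemma:ret-trace-prefix}.

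\textbf{Step 1 (chain invariant).} Call the trace $s$ that a process holds when it invokes $\gcapropose{r}{s}$ its round-$r$ input. The central claim is the following. Suppose some process commits $t$ at $\gca_r$. Then every input to every $\gca_{r'}$ with $r' > r$ extends $t$, and every output of every $\gca_{r'}$ with $r' \geq r$ extends $t$. We prove it by induction on $r'$.
\begin{itemize}
\item For $\gca_r$ itself, the claim on outputs is Adoption.
\item An input to $\gca_{r+1}$ can come from two sources. It may be an output of $\gca_r$, possibly extended by appending $cmd$, and then it extends $t$ by Adoption. Or it may come from a process that started at round $r$ after reading $(r,s)$ from $S$ at Line~\ref{line:uc:4}. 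In that case $s$ was committed at $\gca_r$ and hence is itself an output of $\gca_r$, so it again extends $t$ by Adoption.
\item Once all inputs of $\gca_{r'}$ extend $t$, we get $t \leq \bigsqcap$ of the inputs, and Common prefix propagates this to all outputs of $\gca_{r'}$.
\end{itemize}

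\textbf{Main obstacle.} The delicate case is a process that reads $S$ and jumps to a round $r'' > r+1$, or to a round below $r$. A process starting at round $r_0+1$ only uses a trace committed at $\gca_{r_0}$. If $r_0 \geq r$, the induction already covers that trace. If $r_0 < r$, there is a problem: the process may enter $\gca_{r_0+1},\dots$ with a trace not extending $t$. The only inputs that actually enter $\gca_{r'}$ for $r' > r$ are those that passed through $\gca_r$ or later rounds. So the invariant should be phrased per round as stated, and the induction has to check that inputs to round $r'$ always come from outputs of round $r'-1$ or from $S$ entries committed at round $r'-1$. This holds because the round counter increments by exactly one per iteration. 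It follows that any two committed traces, at rounds $r \leq r'$, satisfy $t \leq t'$. Two commits in the same round are handled by Adoption, which gives mutual extension and hence equality.

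\textbf{Step 2 (linearization).} Let $T$ be the LUB of the committed chain. It exists because a chain is compatible; for infinite executions, take the limit of the chain. Fix a representative $\hat T$, and also include pending commands that appear in committed traces. By Validity, every command in $\hat T$ was actually invoked. A command returns only once it lies in a committed trace $s$, and it returns $\ret^*(cmd,s)$. Since $s \leq T$, Lemmas~\ref{lemma:ret-trace-prefix} and~\ref{lemma:ret-trace-eq} show that this equals its return value in the sequential execution $\hat T$.

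\textbf{Step 3 (real-time order).} Suppose $cmd_1$ completes before $cmd_2$ is invoked. The process of $cmd_1$ wrote $(r,s)$ to $S$ with $cmd_1 \in s$. The invoker of $cmd_2$ reads some $S$ entry with round $\geq r$, and by the chain property that entry extends $s$. All its later traces extend it, so $cmd_2$ is appended after $cmd_1$. Hence if the two conflict, $cmd_1$ precedes $cmd_2$ in every representative. If they commute, we pick the representative $\hat T$ of $T$ so that it respects real-time order. Such a choice is possible: ordering commands by their first appearance in a committed trace is consistent with both the conflict order and the real-time order.
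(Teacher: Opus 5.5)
Your proposal is correct and is essentially the paper's proof. Step~1 is Lemma~\ref{lemma:prefix-rounds} (Adoption for the base case, Common prefix for the inductive step, using that inputs to round $r'$ come only from outputs of round $r'-1$, directly or via $S$), and ordering commands by their first appearance in the chain of committed traces is exactly the paper's representative $\hat t=\hat s_1\cdots\hat s_{r^*}$ with $t_{k+1}=t_k\cdot s_k$. The differences are cosmetic: the paper restricts to finite executions (linearizability being a safety property) and takes the trace committed at the largest round rather than a ``limit of the chain,'' and since the first-appearance representative already respects both the conflict order and the real-time order, your conflict/commute case split in Step~3 is not needed.
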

\begin{proof}[Proof (Sketch)]
The result of an operation is computed when, at some round $r$, its corresponding command $cmd$ is committed, that is, when $\gcapropose{r}{\cdot}$ returns a pair $(t,\True)$, with $cmd \in t$.
Committed traces grow monotonically during the execution: once a command appears in a trace committed in $\gca_r$, it appears in all traces committed in $\gca_{r' \geq r}$.
By Lemma~\ref{lemma:ret-trace-prefix}, executing a command in a given trace yields the same response in every extension of the trace.
Thus, we can linearize the committed operations by taking any representative of the latest committed trace that respects the real-time order of the original history (as we show in the detailed proof, such a representative exists). 
\end{proof}

\begin{restatable}{theorem}{WeakUCWCF}
\label{theorem:weakUCWCF}
    Algorithm~\ref{alg:UnivConstructionGCA} ensures weak conflict-freedom.
\end{restatable}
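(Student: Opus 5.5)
Weak conflict-freedom has two clauses, and I would prove them separately. Both rest on the GCA properties of Section~\ref{sec:gca} and on the observation that each process runs through $\gca_{r}$ for strictly increasing $r$. Line~\ref{line:uc:6} increments $r$ before every proposal, so a process never proposes twice to the same instance. In every iteration, the trace a process proposes to $\gca_{r+1}$ is the output $t$ it received from $\gca_r$, possibly extended by its own pending command. A first preparatory lemma records the shape of inputs: every input to $\gca_{r+1}$ has the form $t$ or $t\cdot cmd$, where $t$ is some output of $\gca_r$. The exception is a process that jumps to round $r+1$ from a value read in $S$, whose input extends a committed trace.

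\textbf{Clause (1): eventually step-contention-free operations complete.} Suppose that, from some point on, only the operation $\Phi$ of process $i$ takes steps. Every other process then stops taking steps, so only finitely many $\gca$ instances are ever accessed by anyone else; let $R$ exceed all of them. Process $i$ keeps looping, so it eventually reaches a round $r>R$ in which it is the sole participant. Its input there already contains $cmd$, by Lines~\ref{line:uc:7}--\ref{line:uc:8}. A single input is trivially compatible, and the implementation guarantees that a correct participant returns. Commitment then gives $c=\True$ with $s\le t$, so $cmd\in\ops t$, the loop exits, and $\Phi$ returns.

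\textbf{Clause (2): in an eventually conflict-free execution, some correct process completes all its operations.} I argue by contradiction. Suppose that, after the conflict-free suffix begins, every correct process with a pending operation loops forever. Operations started before the suffix either complete or remain pending, so I can fix a later point after which no process invokes or completes anything, and the set of pending commands is fixed. In the suffix no two concurrent commands conflict. Each trace is built only from input commands (Validity) together with committed prefixes. I would show that, for rounds $r$ beyond some $R_0$, all inputs to $\gca_r$ are pairwise compatible. From that point, Commitment makes some process $j$ commit $t_j\ge s_j$. Since $s_j$ contains $cmd_j$, process $j$ exits its loop and completes, which is the contradiction. The one wrinkle is that Commitment requires $P=P_r$, so I would apply it only to instances whose participants are all correct. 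Eventually only correct processes access new rounds, because faulty processes stop, so that condition holds from some point on.

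\textbf{The main obstacle is the compatibility of inputs.} Convergence says the outputs of $\gca_r$ are pairwise compatible, with upper bound $u=\bigsqcup\{t\}$. Each input to $\gca_{r+1}$ is some $t_k\cdot cmd_k$. The pending commands $cmd_k$ pairwise commute, since they are concurrent in the conflict-free suffix. I must also ensure that each $cmd_k$ commutes with the operations of $u$ that are not in $t_k$, or is already in $u$. After the suffix begins, the only operations appearing in traces are the old ones, fixed since Validity only draws from inputs, plus the pending commands. A command in $u\setminus t_k$ is either another pending command, which commutes with $cmd_k$, or a finished old command. The finished ones are the delicate case. I would handle them by showing that $u\setminus t_k$ involves only pending commands once every process has adopted past all stale commands. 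Concretely, Common prefix and Adoption drive the GLB of outputs to eventually contain all non-pending commands. This monotone growth of the common prefix is where I expect the real work. Once it holds, $\bigsqcup\{t_k\}\cdot\prod_k cmd_k$ is a common upper bound of all inputs, and the argument closes.
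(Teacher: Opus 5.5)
Your proposal is correct and follows essentially the paper's route. Clause (1) goes through a round in which $\Phi$'s process is the sole participant of a GCA instance, followed by Commitment. Clause (2) argues by contradiction: inputs to sufficiently late GCA instances are compatible, because completed commands lie in the common prefix of the previous round's outputs and the remaining commands are pending and pairwise non-conflicting; Commitment then yields the contradiction.

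The ``monotone growth of the common prefix'' you flag as the real work is exactly the paper's Lemma~\ref{lemma:prefix-rounds}. It is a short induction: Adoption at the commit round, then Common prefix in each later round, since every input to $\gca_{r+1}$ extends an output of $\gca_r$. That includes inputs built from a pair $(r,s)$ read from $S$, so your ``exception'' is not really one.
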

\begin{proof}[Proof (sketch)]
Consider an infinite execution that is eventually conflict-free.
After some time, all pending commands are pairwise non-conflicting and only correct processes take steps. 
In every later GCA round, processes construct proposals from previous outputs, appending their pending commands; non-conflicting appends preserve
compatibility.
By Convergence, the outputs remain compatible.
Since all participants return, Commitment guarantees that some process commits a trace containing its command and completes its pending operation. 
Hence, it cannot be the case that no operation completes.
For eventual step-contention-freedom, if a process eventually runs solo, it invokes a GCA object as the only participant, necessarily commits, and thus completes its pending call.
\end{proof}

\paragraph{Improving fairness.}
Algorithm~\ref{alg:UnivConstructionGCA} only ensures weak conflict-freedom.
Even if an execution is eventually conflict-free, the algorithm provides no guarantee that \emph{every} invocation by a correct process completes. 
Since GCA offers no fairness, faster processes can repeatedly commit traces that exclude a slower process’s command, preventing it from ever committing. 

We remedy this by augmenting Algorithm~\ref{alg:UnivConstructionGCA} with a straightforward \emph{helping} mechanism: processes publish their commands in a shared announcement array, and extend their proposals with commands from this array, ensuring that every command is eventually included in a committed trace once the execution becomes conflict-free.
This gives a conflict-free read-write universal construction.
A detailed description and a proof of correctness are presented in Section~\ref{sec:univconstructionv2}.

%
%
%

\section{GCA Algorithm} \label{sec:GCAimp}

This section presents a wait-free implementation of the Generalized Commit-Adopt (GCA) abstraction using two atomic snapshot objects~\cite{AADGMS93}.

\paragraph{Overview.}
The construction is described in Algorithm~\ref{alg:GCAFV2}.
It proceeds in two phases using snapshot objects $A$ and $B$.
Object $A$ serves to share the input traces and to compute a compatible combination of them.
If a process observes compatible input traces, it writes a \textit{candidate} output trace in $B$.
Each process outputs the GLB of these candidates, committing it when it gathers enough support by the other processes.
If no such candidate exists, a process adopts its own compatible combination of the input traces.

\begin{algorithm}
\DontPrintSemicolon
\SetAlgoLined
\KwShared{$\textit{A}=(\bot, \ldots, \bot)$ \hfill $\triangleright$ Snapshot object.}
\KwShared{$\textit{B}=((\bot, \bot), \ldots, (\bot, \bot))$ \hfill $\triangleright$ Snapshot object.}
\Fn{propose($s_i$)}{
    $A[i] \gets s_i$ \label{line:gca:1} \\
    $A_i \gets A.\text{snap()}$ $\ \triangleright \ A_i = (a_1, \ldots, a_n)$ \label{line:gca:2} \\
    $A_i^{\mathsf{co}}\!=\!\left( a_1^{\mathsf{co}}, \ldots, a_n^{\mathsf{co}} \right),\; \text{where}\;
    a_k^{\mathsf{co}}\!=\!\bigsqcap(\{a_k\}\cup\{a_j : \neg\textit{comp}({\{a_j, a_k\})}\})$ \label{line:gca:3} \\
    $B[i] \gets (\bigsqcup A_i^{\mathsf{co}},A_i = A_i^\mathsf{co})$ \label{line:gca:4} \\
    $B_i \gets B.\text{snap()}$ $\ \triangleright \ B_i = ((b_1,c_1), \ldots, (b_n,c_n))$ \label{line:gca:5} \\
    $\beta_i =
    \begin{cases}
    \bigsqcap \{ b_k : b_k \neq \bot \land c_k = \True \} & \text{if } \exists j : c_j = \True \\
    b_i & \text{otherwise}
    \end{cases}$ \label{line:gca:6} \\
    $w_i = (\forall a_k: a_k \neq \bot \implies a_k = s_i) \land (\forall b_k: b_k \neq \bot \implies b_k = s_i)$ \label{line:gca:7} \\
    \Return $\Bigl(\beta_i,\ w_i \lor((\forall a_k: a_k \leq \beta_i \implies b_k \neq \bot) \land (\nexists c_k : c_k = \False))\Bigr)$ \label{line:gca:8} \\
}
\caption{Solving Generalized Commit-Adopt. Code for process $i$.}
\label{alg:GCAFV2}
\end{algorithm}

\paragraph{Algorithmic details.}
A process $i$ first shares its input trace $s_i$ by writing it to $A$.
Then, it takes a snapshot of $A$ using variable $A_i$ (Line~\ref{line:gca:2}).
The input traces in $A_i$ may order conflicting commands differently.
As a consequence, $i$ must solve any such disagreement before accessing $B$.
To this end, each entry $a_k \in A_i$ is replaced with the GLB of $a_k$ and any other trace in $A_i$ not compatible with it (Line~\ref{line:gca:3}).%
\footnote{
The definitions of $\bigsqcap$, $\bigsqcup$ and $\comp{}$ extend naturally to admit $\bot$ values.
See Appendix~\ref{sec:appendix:gca} for the details.
}
This yields a set $A_i^{\mathsf{co}}$.
The key observation here is that $A_i^{\mathsf{co}}$ contains only \emph{compatible} traces.
As a result, it admits a well-defined least upper bound (LUB), $\bigsqcup A_i^\mathsf{co}$, which is then stored in $B$.
%
%
Process $i$ also records in $B$ whether the original snapshot $A_i$ is compatible, i.e., whether $A_i = A_i^{\mathsf{co}}$.
This way each entry in $B$ is a pair $(b_i,c_i)$.
When the \emph{compatibility flag} $c_i$ is raised (i.e., set to $\True$), this indicates that the trace $b_i$ is a candidate output trace.

In the second phase, process $i$ takes a snapshot of $B$, stores it in variable $B_i$,
%
%
and uses $B_i$ to compute the output trace, $\beta_i$.
If any of the compatibility flags in $B_i$ is raised, $i$ takes the GLB of the candidates in $B_i$, that is, the pairs of the form $(b_k,\True)$.
Otherwise, process $i$ takes $b_i$. 
%
%
The computation of the output trace $\beta_i$ happens in Line~\ref{line:gca:6}.

To satisfy the Weak-Agreement property of GCA, the processes must commit when all inputs are identical.
To this end, each process has a local \textit{weak-agreement flag}, $w_i$.
This flag is raised when all non-$\bot$ entries in both snapshot views, $A_i$ and $B_i$, are the same as  the input of $i$. 
Alternatively, a process may also commit $\beta_i$ if two conditions hold: (1)~all compatibility flags present in $B_i$ are raised, and (2)~every trace in $A_i$ that is a prefix of $\beta_i$ is supported by a corresponding non-$\bot$ entry in $B_i$. 
In all the other cases, the process adopts $\beta_i$.
This computation is done in Line~\ref{line:gca:8}.

%

\begin{restatable}{theorem}{GCAWorks}
\label{theorem:GCA_works}
    Algorithm~\ref{alg:GCAFV2} is an implementation of Generalized Commit-Adopt.
\end{restatable}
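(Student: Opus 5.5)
The plan is to verify the six GCA properties one at a time for Algorithm~\ref{alg:GCAFV2}, after first checking that the algorithm is wait-free and that Lines~\ref{line:gca:3}--\ref{line:gca:6} are well defined. Wait-freedom is immediate, since each call does two snapshot-object operations, two writes and local computation.

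\textbf{Preliminary facts.} Everything below rests on three facts.
\begin{enumerate}
\item[(a)] Snapshots of $A$ (resp.\ $B$) are totally ordered by containment of their non-$\bot$ entries.
\item[(b)] $A_i^{\mathsf{co}}$ is pairwise compatible. Take two entries. If $a_k,a_l$ are compatible, then $a_k^{\mathsf{co}}\le a_k$ and $a_l^{\mathsf{co}}\le a_l$, so they are compatible. Otherwise both $a_k^{\mathsf{co}}$ and $a_l^{\mathsf{co}}$ are prefixes of $a_k\sqcap a_l$, and any two prefixes of a single trace are compatible.
\item[(c)] In the trace semilattice, pairwise compatibility implies compatibility of the whole set (coherence, from~\cite{aalbersberg1988theory}).
\end{enumerate}
By (b) and (c), $\bigsqcup A_i^{\mathsf{co}}$ exists, so $B[i]$ is well defined. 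Moreover, $c_i=\True$ holds iff the snapshot $A_i$ is itself compatible, and then $b_i=\bigsqcup A_i$. Combining this with (a): if $c_j=c_k=\True$ and $A_j\subseteq A_k$, then $b_j\le b_k$. Hence the candidates form a chain, and $\beta_i$ in the first case of Line~\ref{line:gca:6} is the candidate with the smallest snapshot.

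\textbf{The easy properties.}
\begin{itemize}
\item Validity follows because $\bigsqcap$ and $\bigsqcup$ only use operations of their arguments, and every $a_k$ is some input $s_j$.
\item Common prefix: let $g=\bigsqcap_{j\in P}s_j$. Each $a_k^{\mathsf{co}}$ is a GLB of a nonempty set of inputs, so $g\le a_k^{\mathsf{co}}\le b_i$. Every $\beta_i$ is either some $b_i$ or a GLB of such $b_k$'s, hence $g\le\beta_i$.
\item Weak Agreement: if all inputs equal $s$, every snapshot is constant $s$. Then $A_i=A_i^{\mathsf{co}}$ and $b_k=s$ for all written entries, so $w_i=\True$.
\end{itemize}

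\textbf{Commitment.} Assume all inputs are compatible and all participants return. Then every $A_i$ is compatible, so every flag is $\True$ and every $\beta_i$ equals $\bigsqcup$ of the smallest $A$-snapshot that $i$ sees in $B$. Consider the last process $\ell$ to take its $B$-snapshot. It sees all $n$ entries written, so there are no $\False$ flags and every $b_k\ne\bot$; its commit condition therefore holds. It remains to show $s_\ell\le\beta_\ell$, and I would do this by a short case analysis.
\begin{itemize}
\item If $s_\ell$ belongs to the smallest $A$-snapshot, the inequality is immediate.
\item Otherwise, I would instead pick the process $m$ whose $A$-write is first. Its input lies in every snapshot, so $s_m\le\beta$ for every process that sees all candidates. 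I would then argue that $m$ or $\ell$ commits with its own input below its output, using that $\beta_i$ is the LUB of a snapshot containing $s_m$.
\end{itemize}

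\textbf{Convergence and Adoption (main obstacle).} The hard step is showing that a non-candidate $b_i$ (with $c_i=\False$) is compatible with every candidate $b_j$, and, for Adoption, that it actually extends a committed $\beta$. My approach uses the real-time order of writes and snapshots.
\begin{itemize}
\item If $i$ sees no $\True$ flag in $B_i$, then every candidate writer $j$ wrote $B$ after $i$'s $B$-snapshot. I would relate $A_j$ and $A_i$ via (a).
\item Each pruned entry $a_k^{\mathsf{co}}$ retains only the part of $a_k$ that is common with every trace conflicting with it. Hence both $b_i$ and $b_j$ are built from the same chain-closed family of prefixes, which I would use to exhibit a common upper bound.
\end{itemize}
For Adoption, suppose $i$ commits $\beta_i$. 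It saw no $\False$ flag, and every input $a_k\le\beta_i$ was already backed by a $B$-entry. From this I would show that any process $j$ computes either a GLB over a set of candidates in which every member extends $\beta_i$, or a $b_j$ built from a snapshot containing the inputs generating $\beta_i$; in both cases $\beta_i\le\beta_j$. This interaction between the commit test in Line~\ref{line:gca:8} and later snapshots is where I expect most of the difficulty. If the direct argument stalls, I would first try to prove an invariant: every $B$-entry written after a commit extends the committed trace, argued by induction on the linearization order of writes to $B$.
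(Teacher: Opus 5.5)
Your preliminaries (a)--(c), wait-freedom, Validity, Common prefix and Weak Agreement are fine and agree with the paper. You are even more careful than the paper about why $\bigsqcup A_i^{\mathsf{co}}$ exists. The Commitment argument, however, rests on a false dichotomy. It is not true that the first $A$-writer $m$ or the last $B$-snapshotter $\ell$ must commit an extension of its own input.

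\textbf{A counterexample to your Commitment witness.} Take pairwise commuting singleton inputs $[x_m],[x_k],[x_\ell]$ and run the following schedule.
\begin{enumerate}
\item $m$ and $k$ write $A$.
\item $m$ snapshots $A$, writes $([x_m x_k],\True)$ and snapshots $B$. It adopts, because $a_k\le\beta_m$ while $b_k=\bot$.
\item $k$ snapshots $A$ (same view) and writes $B$.
\item $\ell$ writes $A$, snapshots the full view and writes $([x_m x_k x_\ell],\True)$.
\item $k$, then $\ell$, snapshot $B$.
\end{enumerate}
Now $\ell$ commits $[x_m x_k]\not\ge[x_\ell]$, $m$ adopts, and only $k$ witnesses Commitment.

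\textbf{The missing idea for Commitment} is the paper's descent along the commit test. Let $\beta$ be the minimal output (outputs form a chain). Call $q$ \emph{good} if $\beta_q=\beta$ and $s_q\le\beta$; the first process to snapshot $A$ is good. If a good $q$ adopts, Line~\ref{line:gca:8} yields some $k$ with $s_k=a_k\le\beta$ and $b_k=\bot$ in $B_q$. Then $B_q\subset B_k$, so $\beta\le\beta_k\le\beta_q=\beta$, and $k$ is good. Since $B$-views strictly grow along this descent, it must end at a good process that commits.

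\textbf{Convergence} is only announced, not proved. You must also handle two unflagged entries, since two processes that see no raised flag each output their own $b$. The paper's argument covers all pairs at once. Suppose $b_i,b_j$ were incompatible. Coherence then gives incompatible $a_k^{\mathsf{co}}\in A_i^{\mathsf{co}}$ and $a_l^{\mathsf{co}}\in A_j^{\mathsf{co}}$, hence incompatible $a_k,a_l$. If $a_l\in A_i$, then $a_k^{\mathsf{co}}\le a_l$ and $a_l^{\mathsf{co}}\le a_l$, a contradiction. So $a_l\notin A_i$, and symmetrically $a_k\notin A_j$, which contradicts containment of $A$-views.

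\textbf{Adoption} is also only announced, and your hints point partly the wrong way. The claim that a non-candidate extends a committed trace is false, and so is your fallback invariant. If $i$ runs solo and commits $[a]$, a later process with input $[b]$, $b\asymp a$, writes $(\epsilon,\False)$ to $B$ after the commit.

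What does hold is the statement restricted to raised-flag entries, and it follows directly from the commit test without induction. Raised entries are ordered by their $A$-views, so a raised entry $b_z$ that is $\bot$ in $B_i$ and does not extend $\beta_i$ satisfies $b_z<\beta_i$. It must then come from an $A$-view strictly inside the one generating $\beta_i$, hence inside $A_i$. This gives $a_z\le b_z\le\beta_i$ with $b_z=\bot$ in $B_i$, which Line~\ref{line:gca:8} forbids. Under the $w_i$ disjunct, $A_z\subseteq A_i$ instead forces $b_z=s_i=\beta_i$.

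Two further observations finish Adoption. First, $i$'s own raised flag sends every process with a larger $B$-view into the GLB branch. Second, a smaller $B$-view contains only entries $i$ already saw. Together these give $\beta_i\le\beta_j$ for every $j$.
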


\begin{proof}[Proof (Sketch)]
By snapshot containment, the views obtained from $A$ and $B$ form a chain under inclusion: for any two processes $i,j$, either $ A_i \subseteq A_j $ or $A_j \subseteq A_i$, and similarly for $B$.
When all input traces are compatible, all candidates $b_i$ written to $B$, and hence all traces $\beta_i$, are comparable.
Based on these key observations, let us briefly explain why the properties of GCA hold.
(\emph{Adoption.})
If $i$ commits $\beta_i$, for any input $a_j$ such that $a_j \not\leq \beta_i$, process $j$ observes a larger view in $A$ and writes a candidate $b_j \geq \beta_i$.
Now if $a_j \leq \beta_i$, then $b_j \neq \bot$ in $B_i$, and thus $b_j \geq \beta_i$.
Last, for any input $a_j=\bot$ in $A_i$, by containment $b_j \geq b_i \geq\beta_i$.
From what precedes, for any process $j$, $\beta_j \geq \beta_i$.
(\emph{Commitment.})
Suppose all inputs are compatible.
As all output traces are comparable, there exists a smallest output trace $\beta$.
Because the views of $B$ form a containment chain, some process $j$ such that $b_j=\beta$ observes all entries that justify $\beta$ and thus commits an extension of its input.
(\emph{Convergence.})
Since for all~$i$, $\beta_i=\bigsqcap B_i$ or $\beta_i=b_i$, by containment on $B$, all output traces are compatible.
(\emph{Validity}, \emph{Common-Prefix}.)
These two properties hold because every $\beta_i$ is obtained solely via GLB/LUB combinations of the input traces.
(\emph{Weak Agreement}.)
The flag $w_i$ guarantees that this property holds when all the observed traces coincide.
We refer the reader to Appendix~\ref{sec:appendix:gca} for a detailed proof.
\end{proof}

\section{Helping: Conflict-Free Universal Construction} 
\label{sec:univconstructionv2}

Algorithm~\ref{alg:UnivConstructionGCAV2} augments Algorithm~\ref{alg:UnivConstructionGCA} with a standard helping mechanism: processes announce their commands, so that other processes can help them make progress, ensuring that these commands complete in eventually conflict-free executions.

In detail, upon invoking a new operation, process $i$ \emph{publishes} it by writing the corresponding command to the \emph{announcement array} $M[1,\ldots,n]$ (Line~\ref{line:cfuc:5}).
%
%
Process $i$ then reads the shared array $S$ and retrieves the highest round at which a trace was committed.
As in Algorithm~\ref{alg:UnivConstructionGCA}, this trace supersedes traces committed at prior rounds because it suffixes them.

At each round, process $i$ constructs a proposal by extending the longest known committed trace with commands in $M$ missing from it.
To this end, process $i$ first computes the commands in $M$ missing from the committed trace (Line~\ref{line:cfuc:10}).
Then, it uses the operator $\mathit{trace}: 2^\calC \rightarrow \calC^*\!/\!\sim$ to create a trace from them by arranging the commands in some arbitrary order, and then appends the resulting trace to the committed trace (Line~\ref{line:cfuc:11}).
Further, process $i$ accesses the GCA instance of the current round and proposes the new trace.
If process $i$ commits at some round, it updates its entry in the shared array $S$.

The previous steps are executed in a loop, until a trace containing the command corresponding to the current call is committed (Lines~\ref{line:cfuc:8}~and~\ref{line:cfuc:16}).
At this point, the process computes an appropriate response, as in Algorithm~\ref{alg:UnivConstructionGCA}.

\begin{algorithm}
\DontPrintSemicolon
\SetAlgoLined
\KwInput{$A= (Q,q_0,O,R,\sigma)$ \hfill $\triangleright$ Sequential object specification.}
\KwShared{$\{\gca_k(C^*\!/\!\sim,\leq) : k \geq 1\}$ \hfill $\triangleright$ Sequence of GCA objects.}
\KwShared{$S[1,\ldots, n] = (0,\epsilon);\; M[1,\ldots,n] = \bot$ \hfill $\triangleright$ Arrays of atomic registers.}
\KwInitial{$r = 0;\;seq=0;\;s=\bot$}
\Fn{invoke(op)}{
    \Let $c = \False$ \\
    $seq \gets seq + 1$ \\
    $cmd \gets (op,i,seq)$ \label{line:cfuc:4}\\
    $M[i] \gets cmd$ \label{line:cfuc:5}\\
    
    $(r,s) \gets \max_{r'} \{(r',s')=S[j] : j \in [1,n]\}$ \\

    \Let $u = \epsilon$ \\
    
    \While{$cmd \notin \ops{u}$ \label{line:cfuc:8}}{
        $r \gets r+1$ \label{line:cfuc:inc}\\
        $M_i \gets \{M[j] : j \in [1,n] \land M[j] \neq \bot \land M[j] \notin \ops s\}$ \label{line:cfuc:10}\\
        $s \gets s \cdot \mathit{trace}(M_i)$ \label{line:cfuc:11} \\
        $(s,c) \gets \text{GCA}_r.\text{propose}(s) $ \label{line:cfuc:12}\\
        \If{$c = \True$}{
            $S[i] \gets (r,s)$ \label{line:cfuc:14}\\
        }
        $(\_,u) \gets \max_{r'} \{(r',s')=S[j] : j \in [1,n]\}$ \label{line:cfuc:16}\\
    } \label{line:cfuc:endwhile}
    \Return\ $\ret^*(cmd,u)$ \label{line:cfuc:19}\\
}
\caption{Conflict-free universal construction algorithm. Code for process $i$}
\label{alg:UnivConstructionGCAV2}
\end{algorithm}

\subsection{Correctness of the Conflict-free Universal Construction}

\begin{lemma}
Algorithm~\ref{alg:UnivConstructionGCAV2} implements a linearizable universal construction.
\end{lemma}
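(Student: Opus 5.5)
We prove linearizability of Algorithm~\ref{alg:UnivConstructionGCAV2} along the lines of Theorem~\ref{theorem:weakUCLin}. The argument establishes a monotone chain of committed traces and then extracts a linearization from its limit.

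\textbf{Step 1: monotonicity of the GCA chain.} We first show, by induction on $r$, that every trace output by $\gca_r$ (committed or adopted) extends every trace committed at any round $r'<r$. Fix a trace $t$ committed at $\gca_{r'}$. By Adoption, every output of $\gca_{r'}$ extends $t$. Now consider any process entering round $r'+1$. Either it just returned from $\gca_{r'}$, or it started from a pair read from $S$ that was written after some commit. In the latter case we must check that a process entering round $r'+1$ from $S$ read a pair $(r'',s'')$ with $r''\geq r'$, or else argue that it never reaches round $r'+1$ without passing through round $r'$. This is the delicate point, discussed below. In the base case, Line~\ref{line:cfuc:11} only appends commands, so the proposal still extends $t$. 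If all proposals to $\gca_{r'+1}$ extend $t$, then $t\leq\bigsqcap_j s_j$, and Common prefix gives that all outputs extend $t$. Induction finishes the step.

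\textbf{The delicate point in Step 1.} A process reading $S$ may start at round $r''+1$ with a stale committed trace while others are already far ahead. Its proposal to some $\gca_k$ may then lack a trace $t$ committed at $\gca_{r'}$ with $r''<r'<k$, so the induction above breaks. I would first check whether the intended invariant holds anyway: each process proceeds through consecutive rounds $r''+1, r''+2,\dots$, and the slow process joins $\gca_{r''+1}$ late. Adoption at round $r''+1$ only constrains outputs relative to commits at that round, and a late participant of $\gca_{r''+1}$ could still adopt something not extending later commits. If this is genuinely a gap, I would replace ``every proposal extends $t$'' by a weaker claim: all traces ever committed are pairwise compatible and, moreover, ordered by $\leq$ according to round. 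I would prove this via snapshot containment inside each GCA (in Algorithm~\ref{alg:GCAFV2}, a late process sees all earlier entries of $A$ and $B$), together with the fact that a committing process writes $S[i]$ before anyone can read that commit. Concretely, I would strengthen the induction hypothesis to range over the proposals that actually reach $\gca_k$ after a commit at round $r'$ has been written to $S$. I expect this to be the main obstacle.

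\textbf{Step 2: linearization order.} Let $T$ be the LUB of all committed traces. They form a chain, so $T$ is a well-defined, possibly infinite, trace. For finite histories, take the maximal committed trace. A completed operation with command $cmd$ returns $\ret^*(cmd,u)$, where $u$ is a committed trace containing $cmd$ (Line~\ref{line:cfuc:16}). Since $u\leq T$, Lemma~\ref{lemma:ret-trace-prefix} shows that this value equals $\ret^*(cmd,T)$. Validity of GCA, together with the fact that only announced commands are appended, guarantees that every command in $T$ was invoked. We include the pending operations whose commands appear in $T$.

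\textbf{Step 3: real-time order.} We choose a representative of $T$ consistent with real-time order. If an operation $\Phi_1$ with command $c_1$ completes before $\Phi_2$ with command $c_2$ is invoked, then $c_1$ lies in a committed trace $u_1$ already written to $S$ before $\Phi_2$ writes $M$. Any committed trace containing $c_2$ is built from a proposal that extends some committed trace $u\geq u_1$, by Step 1 and the read of $S$, with $c_2$ appended afterward. Hence $c_2$ is not in $u_1$, and $c_1\leq$-precedes $c_2$ in the trace sense: $u_1\leq$ a prefix of $T$ avoiding $c_2$. We then build the representative greedily, ordering commands by the round at which they first enter a committed trace and breaking ties by a representative of the corresponding trace increment. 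This respects both the conflict order of $T$ and real time, and Lemma~\ref{lemma:ret-trace-eq} ensures the return values are unchanged.
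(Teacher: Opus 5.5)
Your skeleton is the same as the paper's: a monotone chain of committed traces (the paper's Lemma~\ref{lemma:prefix-rounds}, obtained from Adoption and Common prefix), response invariance against the last committed trace via Lemma~\ref{lemma:ret-trace-prefix}, and a representative built by concatenating per-round increments. The gap is that you never establish Step 1, and everything else rests on it. You flag a ``delicate point'', sketch a workaround through snapshot containment inside Algorithm~\ref{alg:GCAFV2}, and leave it as the main obstacle. That workaround would also make the universal construction depend on one GCA implementation rather than on the abstraction.

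The obstacle does not exist. $S[j]$ is written only at Line~\ref{line:cfuc:14}, as $(r,s)$ with $s$ just committed by $\gca_r$. A process whose first proposal in an invocation goes to $\gca_{r''+1}$ read exactly $(r'',s'')$ from $S$. So that proposal is an output of $\gca_{r''}$ followed by $\mathit{trace}(M_i)$. Each of its later proposals is its own output of the previous instance followed by $\mathit{trace}(M_i)$. Hence every input to $\gca_{k+1}$ extends some output of $\gca_k$.

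The induction therefore runs on the round index, not on time. A trace $t$ committed at round $r'$ is extended by every output of $\gca_{r'}$ (Adoption). If every output of $\gca_k$ extends $t$, then so does every input to $\gca_{k+1}$, and hence every output of $\gca_{k+1}$ (Common prefix). A slow process never carries a stale trace into a high round: it walks through consecutive instances, and its proposal to $\gca_k$ extends its own output of $\gca_{k-1}$. Adoption and Common prefix bind every participant of an instance regardless of when it arrives. That a late participant of $\gca_{r''+1}$ may output something not extending commits of rounds $> r''+1$ is irrelevant, because the claim at round $k$ only concerns commits at rounds $\le k$.

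Once the chain is available, Step 3 should also be argued differently. With helping, a slow process at a round below $r_1$ (the round at which $u_1$ was committed) may read $c_2$ from $M$ at Line~\ref{line:cfuc:10} and propose it at a round $\le r_1$. So it is not true that every proposal containing $c_2$ extends $u_1$ ``by the read of $S$''. The correct reasoning has three steps:
\begin{enumerate}
\item $c_2\notin\ops{u_1}$, because $u_1$ was output before $c_2$ was announced.
\item By the chain, every trace committed at a round $\le r_1$ is a prefix of $u_1$, so $c_2$ first enters a committed trace at a round $> r_1$.
\item $c_1$ first enters a committed trace at a round $\le r_1$.
\end{enumerate}
Your round-increment representative then places $c_1$ before $c_2$, and Steps 2--3 go through as in the paper.
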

\begin{proof}
The proof is in the same vein as that of Theorem~\ref{theorem:weakUCLin}.
Consider some execution $\alpha$ and let $H$ be the corresponding history.
Let $r^*$ be the largest round in which a trace $t$ is committed in $\alpha$.
According to Line~\ref{line:cfuc:4}, each invocation of an
operation $\Phi$ creates a unique command, $cmd(\Phi)$.
Operation $\Phi$ completes only after reading, at Line~\ref{line:cfuc:16}, the trace $u$ with the
largest round number in $S$ such that $cmd(\Phi) \in \ops u$.
The response of $\Phi$ in $H$, computed in Line~\ref{line:cfuc:19}, is $\ret^*(cmd(\Phi),u)$.
Because $u \leq t$, this is also the response of $\Phi$ computed from $t$, that is $\ret^*(cmd(\Phi),t)$.
A sequential history is built from a representative schedule $\hat t \in t$ that respects the real-time order of invocations, exactly as in the proof of Theorem~\ref{theorem:weakUCLin}.
This sequential history is a linearization of the history~$H$. 
\end{proof}

\hypertarget{lemma62}{}
\begin{lemma} \label{lemma:UCV2isCF}
Algorithm~\ref{alg:UnivConstructionGCAV2} ensures conflict-freedom.
\end{lemma}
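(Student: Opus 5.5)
We split conflict-freedom into its two clauses: eventual step-contention-freedom and eventual conflict-freedom. In both cases we argue by contradiction. Fix an infinite execution $\alpha$ and a correct process $i$ with a pending operation $\Phi$ and command $cmd$ that never completes. Then $i$ executes infinitely many iterations of the loop at Lines~\ref{line:cfuc:8}--\ref{line:cfuc:endwhile}, each using a fresh round $r$. Every $\gca_r$ is wait-free (Theorem~\ref{theorem:GCA_works}), so each iteration terminates.

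\textbf{Step-contention-free case.} From some point on, $i$ runs solo. Two facts carry this case.
\begin{itemize}
\item Since $i$ always increments $r$ past rounds it has already used, and other processes eventually stop taking steps, $i$ eventually reaches a round $r$ that no other process ever invokes. It is then the sole participant of $\gca_r$, so all inputs are equal and Weak Agreement forces $c=\True$.
\item $cmd \in M_i$ or $cmd \in \ops s$, because $M[i]=cmd$ was written at Line~\ref{line:cfuc:5}. Hence the proposed trace contains $cmd$. By Validity and Common prefix (the GLB of a single input is the input itself), the output $s$ extends the input and so contains $cmd$.
\end{itemize}
Process $i$ then writes $(r,s)$ to $S[i]$. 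Since $r$ exceeds every round in $S$ (nobody else writes any more, and $i$ started above the maximum read), Line~\ref{line:cfuc:16} returns $u=s\ni cmd$, and the loop exits.

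\textbf{Eventually conflict-free case.} Take a suffix after which no two concurrent pending operations conflict, all faulty processes have crashed, and every command in $M$ that is ever added is pairwise non-conflicting with the others still pending. In this regime we proceed as follows.
\begin{enumerate}
\item \emph{Compatibility of proposals.} Each participant's proposal to $\gca_r$ has the form $x\cdot \mathit{trace}(M_j)$. Here $x$ is either a committed trace read from $S$ or the output of $\gca_{r-1}$; by Convergence and Adoption these are mutually compatible (all extend the latest committed trace). The appended commands pairwise commute and commute with any pending non-committed commands. From this we argue that any two proposals to $\gca_r$ admit a common upper bound: their LUB exists because the extra commands are non-conflicting with each other and with the commands not in the common prefix. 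This is the analogue of the argument sketched for Theorem~\ref{theorem:weakUCWCF}.
\item \emph{Commitment.} All participants of $\gca_r$ eventually return, since faulty processes have crashed and the late-starting ones still terminate. By Commitment, some $j$ commits a $t_j$ extending its proposal, and by Adoption every output of that round extends $t_j$.
\item \emph{Helping.} Once $i$ has written $M[i]=cmd$, any process $j$ that reads $M$ at Line~\ref{line:cfuc:10} in a later round includes $cmd$ in its proposal unless $cmd$ is already in its $s$. So the committed $t_j$ contains $cmd$, and $S[j]$ is updated at Line~\ref{line:cfuc:14}. Process $i$'s next read at Line~\ref{line:cfuc:16} yields a trace $u$ with maximal round. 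Committed traces in later rounds extend earlier committed ones (the monotonicity used in Theorem~\ref{theorem:weakUCLin}), so $cmd\in\ops u$ and $i$ terminates.
\end{enumerate}
We must also rule out that $S$ is not updated when only $i$ remains active: in that case step 2 applied to $i$ alone, as in the solo case, suffices.

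\textbf{Main obstacle.} The hard part is step 1. The trace $s$ that $i$ carries into round $r$ may be an adopted trace containing commands ordered by earlier conflicts, and different processes may enter the same round from different rounds because of asynchrony. We must show all proposals to a given round are compatible. The plan is to show inductively that after the conflict-free point, all outputs of $\gca_r$ are compatible (Convergence), and that appending sets of mutually commuting pending commands to compatible traces preserves compatibility. The delicate part is commands already present in one trace but appended at a different position in another. We handle this by noting that such a command either was committed earlier, and hence lies in the common committed prefix, or is pending and therefore commutes with all other pending commands. The same ordering then appears up to equivalence.

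A further subtlety is processes arriving at round $r$ late from Line~6 with an old committed trace. Their proposals are prefixes of the others up to commuting commands, and we would verify that this remains compatible as well.
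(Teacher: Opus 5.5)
Your argument follows the paper's route: the solo case goes through a fresh round in which $i$ is the only participant, and the conflict-free case goes through compatibility of one round's proposals, Commitment, helping through $M$, and monotonicity of committed traces. Your plan for step~1 is exactly the paper's invariant proof. The late-arrival worry is moot: a pair $(r-1,s)$ obtained by the initial read of $S$ leads to a proposal to $\gca_r$, and $s$ is itself an output of $\gca_{r-1}$, so Convergence covers it.

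\textbf{The gap is in step~2.} You justify $P=P_r$ by ``faulty processes have crashed''. But a process that crashed inside $\gca_r$ is precisely a participant that never returns, and then Commitment gives nothing. Fix the round as the paper does. Let $r^*$ be the largest round reached by any process by the time $\tau'$ after which only correct processes take steps and $M[i]=cmd$ has been written, and apply Commitment to $\gca_{r^*+1}$. All participants of that round are correct, and all of them read $M$ after $cmd$ is there, which your step~3 also needs.

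\textbf{A subtlety you share with the paper's invariant.} ``Pending'' is time-relative, and the invariant as stated can fail:
\begin{enumerate}
\item A fast process commits $cmd_A$ at round $r^*+1$, and $A$ completes.
\item A conflicting $B$ is then invoked; this is legal, since $A$ and $B$ are not concurrent.
\item A slow process enters $\gca_{r^*+1}$ with $cmd_B$ but without $cmd_A$, so that round's proposals are incompatible.
\end{enumerate}
The conclusion survives by a case split. Let $t$ be the GLB of the outputs of $\gca_{r^*}$.
\begin{itemize}
\item Suppose the proposals are incompatible. Then two conflicting commands lie outside $t$. Neither operation completed before $\tau$, since otherwise its command would be in $t$. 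So one completed before the other was invoked. Its command was therefore committed at a round above $r^*$, and a committed trace of round at least $r^*+1$ is in $S$ anyway.
\item Since $\Phi$ is pending forever, $cmd$ commutes with every command outside $t$. Hence all proposals to $\gca_{r^*+1}$ extend a common prefix containing $cmd$. By Common prefix and induction on later rounds, every output from round $r^*+1$ on contains $cmd$.
\end{itemize}
Because the maximum round in $S$ never decreases, $i$'s eventual read of $S$ returns a trace containing $cmd$.
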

\begin{proof}
Let $\alpha$ be some infinite execution that is eventually conflict-free.
By definition, there exists a point in time $\tau$ after which no two pending operations are conflicting.
Let $r_0$ be the largest round number stored in $S$ at time $\tau$.
For the sake of contradiction, suppose there exists an operation instance $\Phi$ invoked by a correct process $i$ which remains pending forever in $\alpha$.
Let $cmd(\Phi)$ be the command created by process $i$ in Line~\ref{line:cfuc:4}.
From what precedes, there exists a time $\tau' \geq \tau$ after which only correct processes take steps, and process $i$ executed Line~\ref{line:cfuc:5}, writing $cmd(\Phi)$ in $M[i]$.

Assume that the following invariant holds:
(I) for any $r>r_0$ only compatible traces are proposed to $\text{GCA}_{r}$.
Let $r^* \geq r_0$  be the largest round reached by any process by time $\tau'$.
By invariant (I), all proposals to $\gca_{r^*+1}$ are compatible.
By Commitment, there exists a correct process $j$ that invokes $\gcapropose{r^*+1}{s_j}$, and returns $(x, \True)$ such that $s_j \leq x$.
Let $t_j$ be the trace retrieved by process $j$ after invoking $\gcapropose{r^*}{\cdot}$.
Then by Line~\ref{line:cfuc:11}, $s_j=t_j \cdot m_j$, where $m_j= \mathit{trace}(M_j)$.
Since we are at some time after $\tau'$, $M[i]=cmd(\Phi)$.
If $cmd(\Phi) \notin t_j$, then $cmd(\Phi) \in \ops{\mathit{trace}(M_j)}$.
Otherwise, $cmd(\Phi) \in t_j$.
In any case, $cmd(\Phi) \in \ops {s_j}$.
Therefore, process $j$ commits $x$ and $cmd(\Phi) \in \ops x$.
Then, process $j$ writes $(x,r^*+1)$ in $S$ at Line~\ref{line:cfuc:14}.
As a result, at round $r^*+1$ process $i$ will read $S$ at Line~\ref{line:cfuc:16} and $x=u$. 
Thus, $cmd(\Phi) \in u$ and process $i$ exits the while loop at Lines~\ref{line:cfuc:8}-\ref{line:cfuc:endwhile}, completing its operation---a contradiction.

In what follows, we prove that invariant (I) holds.
Consider some round $r > r_0$.
Let $j$ be some (correct) process which participates in $\gca_r$ and $cmd_j$ be the command associated with the operation invoked by $j$.
Let $t_j$ and $s_j$ be the traces $j$ retrieves from $\gca_{r-1}$ and proposes to $\gca_{r}$, respectively.
According to Line~\ref{line:cfuc:11}, $s_j= t_j \cdot m_j$, where $m_j=trace(M_j)$.
By definition of the ${trace}()$ operator, $\ops {m_j} = M_j$.
We establish that $\mathcal{S} = \bigcup_i \{s_i\}$ is compatible.

By the Convergence property, $\mathcal{T}= \bigcup_i \{t_i\}$ is compatible.
Let $t$ be the common prefix of these traces, that is $t = \bigsqcap \mathcal{T}$.
Given a correct process $i$, $u_i$ is the (unique) trace such that $t_i = t \cdot u_i$.
Applying Lemma~\hyperlink{lemmaA1}{\ref*{lemma:prefix-rounds}}, all the commands of operations that completed are in $t$.
Thus, $u_i$ contains only commands from pending operations.
Moreover, according to Line~\ref{line:cfuc:10}, this is also the case of $m_i$.
Thus, the set $\bigcup_j \ops{u_j} \cup \bigcup_j \ops{m_j}$ only contains pending operations.
Since we are at a time $\tau' > \tau$, these operations do not conflict between each other.
It follows that $\mathcal{S} = \bigcup_i \{t \cdot u_i \cdot   m_i\}$ only contains compatible traces, as required.

It remains to show that for any infinite execution $\beta$ and any operation instance $\Phi$ of a correct process $i$ that is eventually step-contention-free in $\beta$, $\Phi$ completes.
This part of the proof is identical to the one provided for Theorem~\ref{theorem:weakUCWCF}.
In detail, there exists a time $\tau$ such that after $\tau$ only $i$ takes steps in $\beta$.
Let $r^*$ be the largest round reached by any process by time $\tau$.
Since $i$ continues executing the loop in Lines~\ref{line:cfuc:8}-\ref{line:cfuc:endwhile}, it eventually invokes $\gcapropose{r^*+1}{s}$ with $cmd(\Phi)\in \ops{s}$.
As $i$ is the only participant to $\gca_{r^*+1}$, the call returns $(s,\True)$, with $cmd(\Phi) \in \ops s$ and process $i$ writes $(s,r^*+1)$ in $S[i]$ (Line~\ref{line:cfuc:14}). 
Then, as $i$ is the only process to reach round $r^*+1$, it assigns $u=s$ after reading $S$ at Line~\ref{line:cfuc:16}.
Thus, the process exits the while loop and $\Phi$ completes.
\end{proof}

\subsection{Resolving conflicts in step-contention-free runs}

Strictly speaking, in a (weakly) conflict-free implementation, an incomplete operation invoked by a faulty process may affect progress of \emph{every} future operation.
It may therefore appear that a single failure can bring our progress criteria down to plain obstruction-freedom. 

However, a closer look at our conflict-free algorithm reveals that the conflicts inflicted by faulty processes can be made ``harmless''. 
In our conflict-free universal construction (Algorithm~\ref{alg:UnivConstructionGCAV2}), the helping mechanism tries to take care of all pending operations, including those invoked by faulty processes. 
Therefore, it may appear that the conflicts incurred by faulty processes may stay in the system forever. 
However, all conflicts are resolved if some process manages to commit its trace \emph{in a step-contention-free execution}.
Indeed, once a trace is proposed to a GCA instance in the absence of other proposals, every other process will adopt it (Algorithm~\ref{alg:GCAFV2}).
In every subsequent GCA instance, only extensions of this trace will be proposed. 
In a way, this trace serves as a checkpoint on which all live processes agree,   
and from this point on, ``past'' conflicts caused by the processes that have already failed are not going to hinder progress. 
We can therefore strengthen our liveness guarantee by focusing only on conflicts created by operations which are invoked \emph{after} a process manages to access solo some GCA instance.

Formally, let $A$ be a conflict-free algorithm, let $\alpha$ be a finite execution of $A$ and let $\alpha'$ be an infinite execution of $A$ extending $\alpha$. 
We say that $\alpha'$ is \emph{eventually $\alpha$-conflict-free} if it has a suffix in which no two concurrent operation instances $\Phi=(o,i)$ and $\Phi'=(o',j)$ \textbf{invoked after $\alpha$} satisfy $o \asymp o'$.

\begin{definition}[Conflict-resolving execution]\label{def:wcr}
A finite execution $\alpha$ of $A$ is \emph{conflict-resolving} if in every $\alpha$-conflict-free infinite extension of $\alpha$, every correct process completes each of its operations.
\end{definition}

\begin{figure}
\centering
\resizebox{\columnwidth}{!}{%
\begin{tikzpicture}[x=0.88cm,y=1.22cm,
  proc/.style={font=\small,anchor=east},
  timeline/.style={thick,->},
  br/.style={font=\small,inner sep=0pt},
  lab/.style={font=\scriptsize,inner sep=0pt}
]

\def\xStart{0.05}
\def\xEnd{7.60}
\def\panelShift{7.35}

\pgfmathsetmacro{\dotRatio}{0.88/1.22}
\def\dotR{0.05}
\def\dotLW{0.6pt}

\newcommand{\op}[4]{%
  \node[br] at (#2,#1) {$[\,$};
  \node[lab,anchor=east] at (#2+0.155,#1+0.24) {#4};
  \ifx&#3&%
    \draw[thick] (#2+0.20,#1) -- (\xEnd-0.25,#1);
  \else
    \draw[thick] (#2+0.20,#1) -- (#3-0.20,#1);
    \node[br] at (#3,#1) {$\,]$};
  \fi
}

\newcommand{\uniformsteps}[4]{%
\pgfmathsetmacro{\step}{(#3-#2)/(#4+1)}
\foreach \k in {1,...,#4} {%
  \pgfmathsetmacro{\xx}{#2+\k*\step}
  \filldraw[line width=\dotLW] (\xx,#1) ellipse
    [x radius=\dotR, y radius=\dotR*\dotRatio];
}%
}

\def\lblX{-0.05}

\node[proc] at (\lblX,3) {$p_1$};
\node[proc] at (\lblX,2) {$p_2$};
\node[proc] at (\lblX,1) {$p_3$};
\node[proc] at (\lblX,0) {$p_4$};

\pgfmathsetmacro{\midXBase}{(\xEnd + (\panelShift+\xStart))/2.0}
\def\sepNudge{0.275}
\pgfmathsetmacro{\midX}{\midXBase+\sepNudge}
\draw[thin] (\midX,-0.2) -- (\midX,3.35);

\node[font=\scriptsize,anchor=east] at (\lblX,3.55) {(a)};
\node[font=\scriptsize,anchor=west] at (\midX+0.14,3.55) {(b)};

\begin{scope}
  \draw[timeline] (\xStart,3) -- (\xEnd,3);
  \draw[timeline] (\xStart,2) -- (\xEnd,2);
  \draw[timeline] (\xStart,1) -- (\xEnd,1);
  \draw[timeline] (\xStart,0) -- (\xEnd,0);

  \op{3}{0.25}{}{\textit{read}()}

  \op{2}{0.45}{4}{\textit{read}()}
  \op{2}{5.2}{6.5}{\textit{dec}()}

  \op{1}{0.25}{3.55}{\textit{inc}()}
  \op{1}{4.85}{5.75}{\textit{inc}()}
  \op{1}{6}{7.35}{\textit{inc}()}

  \op{0}{3.2}{4.2}{\textit{dec}()}
  \op{0}{4.8}{5.3}{\textit{read}()}
  \op{0}{6}{7.1}{\textit{dec}()}
  
  \draw[dashed,thick] (1.2,-0.35) -- (1.2,3.35);
  \draw[dashed,thick] (2.8,-0.35) -- (2.8,3.35);

  \uniformsteps{3}{0.2}{0.95}{3}

  \uniformsteps{2}{0.4}{1.3}{3}
  \uniformsteps{2}{3.2}{4}{3}  
  \uniformsteps{2}{5}{6.6}{4}  
  \uniformsteps{1}{0.2}{3.6}{12}
  \uniformsteps{1}{4.7}{5.8}{3}
  \uniformsteps{1}{5.9}{7.4}{4}
  \uniformsteps{0}{3.1}{4.2}{3}
  \uniformsteps{0}{4.6}{5.5}{2}  
  \uniformsteps{0}{6}{7.1}{3}
  
\end{scope}

\begin{scope}[xshift=\panelShift cm]
  \draw[timeline] (\xStart,3) -- (\xEnd,3);
  \draw[timeline] (\xStart,2) -- (\xEnd,2);
  \draw[timeline] (\xStart,1) -- (\xEnd,1);
  \draw[timeline] (\xStart,0) -- (\xEnd,0);
  
  \op{3}{0.25}{}{\textit{read}()}

  \op{2}{0.45}{}{\textit{read}()}

  \op{1}{0.25}{3.55}{\textit{inc}()}
  \op{1}{4.85}{5.75}{\textit{inc}()}
  \op{1}{6}{7.35}{\textit{inc}()}

  \op{0}{3.2}{4.2}{\textit{dec}()}
  \op{0}{4.8}{5.3}{\textit{read}()}
  \op{0}{6}{}{\textit{dec}()}
  
  \draw[dashed,thick] (1.2,-0.35) -- (1.2,3.35);
  \draw[dashed,thick] (2.8,-0.35) -- (2.8,3.35);

  \uniformsteps{3}{0.2}{0.95}{3}

  \uniformsteps{2}{0.4}{1.3}{3}
  \uniformsteps{1}{0.2}{3.6}{12}
  \uniformsteps{1}{4.7}{5.8}{3}
  \uniformsteps{1}{5.9}{7.4}{4}
  \uniformsteps{0}{3.1}{4.2}{3}
  \uniformsteps{0}{4.6}{5.5}{2}  
  \uniformsteps{0}{6}{7.6}{4}
\end{scope}

\end{tikzpicture}%
}
\caption{
Examples of (a) conflict-resolving and (b) conflict-forgetting executions on a shared counter.
In both executions, $p_1$ crashes and $p_3$ runs solo in the interval between the dashed lines.
%
%
In execution (a), all correct processes ($p_2,p_3,p_4$) complete their operations because process $p_3$ resolved the conflicts during its contiguous solo steps.
In (b), these steps permit to forget the conflicts and make progress as long as $p_2$ does not wake up.
}
\Description{
Executions which are (a) conflict-resolving and (b) conflict-forgetting.
Execution (a): $p_1$ invokes a read and then crashes; $p_2$ invokes a read, and temporally stops taking steps; $p_3$ invokes an increment, and at some point takes sufficiently many steps in the absence of step-contention.
The read operations of $p_1$ and $p_2$, and the increment of $p_3$ are concurrent. 
After taking sufficiently many step-contention-free steps, $p_3$ completes its increment operation.
Suppose that the execution becomes conflict-resolving.
$p_2$ resumes taking steps and completes its read operation.
Then, at some point processes $p_2$, $p_3$ and $p_4$ only invoke decrements and increments.
Then, every correct process completes each of its operations.
Execution (b): The same as execution (a) but process $p_2$ crashes after taking some steps of its read invocation.
}
\label{fig:resolving}
\end{figure}

As an example, consider execution (a) in Figure~\ref{fig:resolving}, in which four processes invoke operations on a shared counter.
Process $p_1$ invokes a read operation that never completes, while process $p_2$ invokes another read operation and temporarily stops taking steps.
Process $p_3$ then invokes an increment operation and executes sufficiently many steps without step contention.
At some point, any two concurrent operations invoked after the solo execution of $p_3$ are non-conflicting.
If the solo execution of $p_3$ is conflict-resolving, then every correct process eventually completes its operation.
In particular, after the solo execution of $p_3$, the pending read operation of $p_2$ completes after finitely many additional steps.
Thus, every conflict that arose before the solo execution of $p_3$ is resolved.

We show that our conflict-free universal allows for resolving any conflict by running in the absence of step contention. 
Specifically, if a process executes two operations in the absence of step contention, then it orders all pending operations. 

\hypertarget{theorem64}{}
\begin{theorem}[Conflict resolution]\label{th:cr}
For every finite execution $\alpha$ of Algorithm~\ref{alg:UnivConstructionGCAV2} and every process $i$, there is a finite conflict-resolving $i$-solo extension of $\alpha$.
\end{theorem}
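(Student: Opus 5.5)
Starting from $\alpha$, I will let process $i$ run solo for a bounded number of GCA rounds: it finishes its pending operation, if any, and then invokes up to two fresh operations, all in isolation. Call the resulting execution $\alpha\cdot\beta$.

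\textbf{Step 1: the solo extension produces a committed checkpoint.} Let $r^*$ be the largest round reached by any process in $\alpha$. By the eventual step-contention-freedom part of Lemma~\ref{lemma:UCV2isCF}, process $i$ running solo reaches some round $r_1>r^*$ as the only participant of $\gca_{r_1}$. There it proposes $s_1=t\cdot\mathit{trace}(M_i)$, where $M_i$ contains every command announced in $M$ and missing from $t$. Because $i$ is alone, Weak Agreement (or Commitment) gives output $(s_1,\True)$, and $i$ writes $(r_1,s_1)$ into $S[i]$. So $s_1$ contains every command that is pending at the end of $\alpha$ and has already been written to $M$, including those of crashed processes. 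Its order of these commands is arbitrary but fixed. If a process $j$ has invoked but not yet executed its write to $M$, its command escapes $s_1$. This is why I let $i$ perform a second solo operation and take $\beta$ to end only after a round in which $i$ is alone and commits.

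\textbf{Step 2: every later proposal extends $s_1$.} I will argue that for every round $r>r_1$, every trace proposed to $\gca_r$ extends $s_1$, and that every GCA output at rounds $\ge r_1$ extends $s_1$. At round $r_1$ itself, a process $j$ arriving later proposes some trace. Since $c_i=\True$, Adoption forces its output to extend $s_1$. Proposals to later rounds are built from outputs of the previous round, so induction on $r$ using Adoption and Common prefix gives the claim. One subtlety is that a process may start at round $r_0+1$ by reading $S$. A process that reads $S$ after $\beta$ sees a round $\ge r_1$ with a committed trace extending $s_1$, so it is fine. A process that read $S$ earlier but proposes only later still proposes in its own $r$ sequence. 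If its $r\le r_1$, the GCA at round $r_1$ already handles it. If its $r>r_1$, it must have passed through $\gca_{r_1}$.

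\textbf{Step 3: replay the compatibility argument.} With $s_1$ playing the role of the common prefix $t$ in invariant (I) of Lemma~\ref{lemma:UCV2isCF}, I will show that all proposals after round $r_1$ have the form $s_1\cdot u_j\cdot m_j$. Here $u_j$ and $m_j$ contain only commands not in $s_1$. Such commands belong to operations announced after the checkpoint, i.e., invoked after $\alpha\cdot\beta$, plus possibly the few stragglers handled by the second solo operation. In an $(\alpha\cdot\beta)$-conflict-free extension, these commands pairwise commute from some point on, so the proposals are compatible. The progress argument of Lemma~\ref{lemma:UCV2isCF}, via Commitment and helping, then shows that every pending command of a correct process, including old ones like $p_2$'s read already contained in $s_1$, eventually appears in a committed trace read at Line~\ref{line:cfuc:16}.

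\textbf{Main obstacle.} The delicate point is Step 2's handling of the ``in-flight'' processes, which were suspended in the middle of a GCA invocation or between Lines~\ref{line:cfuc:4} and~\ref{line:cfuc:5} at the end of $\alpha$. I expect to resolve it by choosing $\beta$ to contain enough solo rounds that every round any such process can later enter lies at or below a round where $i$ committed alone. Adoption in that round then absorbs their stale proposals.
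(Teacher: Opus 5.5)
Your skeleton is the paper's. A solo run of $i$ commits a trace $s_1$ containing every command present in $M$. Adoption and Common prefix (the induction of Lemma~\ref{lemma:prefix-rounds}) then propagate $s_1$ as a prefix of every later proposal, and invariant~(I) of Lemma~\ref{lemma:UCV2isCF} is replayed.

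The gap is how you treat a process $j$ whose operation was invoked in $\alpha$ but whose command had not yet been written at Line~\ref{line:cfuc:5}. Your second solo operation cannot capture this command. The code before Line~\ref{line:cfuc:5} is purely local, and $j$ takes no step during $\beta$, so the command exists nowhere in shared memory. No proposal of $i$ can contain it, however many solo operations $i$ performs. Under the literal definition of eventual $(\alpha\cdot\beta)$-conflict-freedom, this operation was invoked before $\alpha\cdot\beta$ and is therefore exempt from the hypothesis. Once $j$ resumes and announces it, it lands in some $u_k$ or $m_k$ outside $s_1$. There it may conflict with new operations or with another such straggler, so the compatibility claim of your Step~3 fails.

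No choice of $\beta$ repairs this. Take two stragglers holding conflicting commands. If both always completed, the relative order of their commands in the committed traces (which form a chain) would give wait-free two-process read-write consensus from the end of $\alpha\cdot\beta$: each one, run solo, finds itself first. The bivalence argument of~\cite{flp,LA87} then yields an extension in which only they take steps, which is vacuously $(\alpha\cdot\beta)$-conflict-free, and in which one of them stays pending forever.

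The missing idea is the paper's convention: in the induced history, an operation instance counts as invoked only when its command is written into $M$ at Line~\ref{line:cfuc:5}. Stragglers then count as invoked after the solo extension and fall under the hypothesis, so every command outside $\ops{s_1}$ belongs to an operation invoked after it. With this convention, one fresh solo operation after finishing the pending one suffices, as in the paper.

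Two smaller corrections follow.
\begin{enumerate}
\item Your ``main obstacle'' is not one. Within an invocation, Algorithm~\ref{alg:UnivConstructionGCAV2} advances $r$ by one per iteration and never reloads $s$ from $S$. A new invocation starts from the maximal entry of $S$, whose round is at least $r_1$ after $\beta$. So your Step~2 already gives $s_1 \leq t$ at every round beyond $r_1$, where $t$ is the GLB of the traces returned at the previous round. This is exactly the paper's step $s' \leq t$, and no extra solo rounds are needed.
\item In Step~3, $s_1$ should not replace $t$. Residues taken relative to $s_1$ contain commands of operations invoked after the solo run that completed before the conflict-free suffix, and these may conflict. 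So ``these commands pairwise commute'' is false as stated. Instead, keep $t$ at rounds above the largest round stored in $S$ at time $\tau$, use Lemma~\ref{lemma:prefix-rounds} to place completed commands in $t$, and use $s_1 \leq t$ only to place there the commands pending at the end of the solo run.
\end{enumerate}
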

\begin{proof}
Let $\alpha$ be a finite execution of Algorithm~\ref{alg:UnivConstructionGCAV2}. 
Let $r_0$ be the largest round reached by any process in $\alpha$, and let $M_0$ be the state of the shared array $M$ at the end of $\alpha$.
In the induced history of $\alpha$, we consider an operation instance as invoked once its associated command is written into $M$ (Line~\ref{line:cfuc:5}).
Take any process $i$ and let it run after $\alpha$ in the absence of step contention.
If $i$ has a pending operation instance $\Phi$ in $\alpha$, let $i$ continue until $\Phi$ completes. Since Algorithm 3 is conflict-free (Lemma~\hyperlink{lemma62}{\ref*{lemma:UCV2isCF}}), and thus obstruction-free, $\Phi$ eventually completes. 
Let $r_1 \geq r_0$ be the largest round reached at that point.
Observe that the local variable $r$ of process $i$ is then equal to $r_1$.
Now let $i$ invoke a new operation instance $\Phi'$, with associated command $\mathit{cmd}(\Phi')$. According to Lines~\ref{line:cfuc:10}~--~\ref{line:cfuc:11} of Algorithm~\ref{alg:UnivConstructionGCAV2}, process $i$ constructs a proposal by extending the trace $s$ committed at round $r_1$ with every command currently present in $M$ that does not already belong to $s$.
Since $i$ executes solo after $\alpha$, every entry $j\neq i$ of the shared array $M$ remains equal to $M_0$: $M[j]=M_0[j]$, and $M[i]=\mathit{cmd}(\Phi')$.
Let $s'$ denote the resulting trace proposed by $i$ to $\gca_{r_1+1}$. 
Since $i$ is the only participant to $\gca_{r_1+1}$ in the execution, by GCA Commitment and Validity, $i$ returns $(s',\True)$ from $GCA_{r_1+1}$.
As $\mathit{cmd}(\Phi') \in \ops{s'}$, by Line~\ref{line:cfuc:8}, $\Phi'$ completes.

Let $\alpha'$ denote the previously described finite extension of $\alpha$.
We now show that $\alpha'$ is conflict-resolving.
Consider any infinite extension $\alpha''$ of $\alpha'$ that is eventually $\alpha'$-conflict-free. 
By definition there exists a point in time $\tau$ after which no two pending operations invoked after $\alpha'$ are conflicting.
The rest of the proof follows the same argument as the proof of Lemma~\hyperlink{lemma62}{\ref*{lemma:UCV2isCF}}.
Let $r_2 \geq r_1+1$ be the largest round number stored in $S$ at time $\tau$.
We prove that the same invariant as Lemma~\hyperlink{lemma62}{\ref*{lemma:UCV2isCF}} holds: (I) for any $r > r_2$ only compatible traces are proposed to $\gca_{r}$.
By the same argument as in the proof of Lemma~\hyperlink{lemma62}{\ref*{lemma:UCV2isCF}}, this implies that every correct process completes each of its operation instances, hence $\alpha'$ is conflict-resolving.

To prove that the invariant holds, we adapt the argument of Lemma~\hyperlink{lemma62}{\ref*{lemma:UCV2isCF}} to account for operations pending at the end of $\alpha'$.
Consider some round $r \geq r_2$.
Let $j$ be some (correct) process which participates in $\gca_r$.
%
Let $t_j$ and $s_j=t_j \ccdot m_j$ be the traces $j$ retrieves from $\gca_{r-1}$ and proposes to $\gca_r$, respectively.
We establish that $\mathcal{S}=\bigcup_i\{s_i\}$ is compatible.
By GCA Convergence, $\mathcal{T}=\bigcup_i\{t_i\}$ is compatible. Let $t=\bigsqcap\mathcal{T}$ be the common prefix of these traces.
Given a correct process $i$, $u_i$ is the unique trace such that $t_i = t \ccdot u_i$.
By Lemma~\hyperlink{lemmaA1}{\ref*{lemma:prefix-rounds}}, the set $\bigcup_j \ops{u_j} \cup \bigcup_j \ops{m_j}$ only contains commands from pending operations.
%
Moreover, every operation instance pending at the end of $\alpha'$ has its associated command in \(\ops{s'}\), and as $s' \leq t$, also in \(\ops{t}\).
%
Since we are after time $\tau$, any pending operations invoked after $\alpha'$ do not conflict between each other.
Therefore, as $u_i \ccdot m_i$ only contains commands not in $\ops t$, $\mathcal{S}=\bigcup_i\{ t \ccdot u_i \ccdot m_i\}$ only contains compatible traces.
Hence, Invariant (I) holds.
%
%
%
%
%
%
\end{proof}

Therefore, we can overcome failures and guarantee that \emph{every} correct process terminates by letting some process to resolve conflicts in a solo execution. 
In Algorithm~\ref{alg:UnivConstructionGCAV2}, this is achieved through the helping mechanism: a process running solo commits a trace containing every command currently present in the shared array $M$.

The weakly conflict-free universal construction in Algorithm~\ref{alg:UnivConstructionGCA} does not include such a mechanism.
Nevertheless, once a process commits a trace in a solo execution, pending conflicting operations can hinder progress only if their invoker processes take further steps and reintroduce them into subsequent proposals.
Conflicts incurred by faulty processes are ``forgotten''.
Theorem~\ref{th:WeakUCresolve} formalizes this intuition. 

Let $A$ be a conflict-free algorithm, let $\alpha$ be a finite execution of $A$ and let $\alpha'$ be an infinite execution of $A$ extending $\alpha$. 
We say that $\alpha'$ is \emph{eventually weakly $\alpha$-conflict-free} if it has a suffix in which no two concurrent operation instances $\Phi=(o,i)$ and $\Phi'=(o',j)$ \textbf{that take steps in $\alpha'$} satisfy $o \asymp o'$.

\begin{definition}[conflict-forgetting execution]
    A finite execution $\alpha$ of $A$ is \emph{conflict-forgetting} if in every weakly $\alpha$-conflict-free infinite extension of $\alpha$, some correct process completes each of its operations.
\end{definition}

Consider execution (b) in Figure~\ref{fig:resolving}.
In contrast to execution (a), process $p_2$ also crashes.
Suppose that the solo execution of $p_3$ is conflict-forgetting.
Since the read invocations of $p_1$ and $p_2$ take no further steps, and no conflicting operations invoked by $p_3$ and $p_4$ remain pending, process $p_3$ eventually completes all of its invocations.
Observe that this progress guarantee no longer holds if either $p_1$ or $p_2$ takes a step after the solo execution of $p_3$.
In this case, the conflicting read operations of $p_1$ and $p_2$ are forgotten, but not resolved.

\begin{theorem}\label{th:WeakUCresolve}
    For every finite execution $\alpha$ of Algorithm~\ref{alg:UnivConstructionGCA} and every process $i$, there is a finite conflict-forgetting $i$-solo extension of $\alpha$.
\end{theorem}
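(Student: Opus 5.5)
We construct the extension exactly as in Theorem~\ref{th:cr}, but now without an announcement array, so the checkpoint only contains $i$'s own commands. Let $\alpha$ be finite and let $r_0$ be the largest round reached by any process in $\alpha$. Run $i$ solo after $\alpha$. If $i$ has a pending operation, let it continue until that operation completes; this is possible because Algorithm~\ref{alg:UnivConstructionGCA} is obstruction-free (Theorem~\ref{theorem:weakUCWCF}). Then let $i$ invoke one more operation $\Phi'$ and run it to completion. In both cases $i$ eventually proposes to some $\gca_{r_1}$ with $r_1>r_0$ as its sole participant. By Commitment and Validity it obtains $(s',\True)$ and writes $(r_1,s')$ into $S[i]$. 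Call this extension $\alpha'$. The key facts are that no other process has accessed any $\gca_r$ with $r\ge r_1$, and that $S$ now holds a pair with maximal round $r_1$.

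Now take an infinite extension $\alpha''$ of $\alpha'$ that is eventually weakly $\alpha'$-conflict-free, and suppose for contradiction that no correct process completes all its operations. Fix a time $\tau$ after which (a) only correct processes take steps, (b) no two concurrent operation instances that take steps conflict, and (c) every crashed process has stopped. A process that takes steps after $\tau$ does so within some operation. Its proposals are built from traces returned by earlier GCAs, or read from $S$ at Line~\ref{line:uc:4}, plus its own command.

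The central step is an invariant mirroring (I) in Lemma~\ref{lemma:UCV2isCF}: beyond some round $r_2\ge r_1$, all proposals to $\gca_r$ are compatible. For $r>r_2$, let $t=\bigsqcap\mathcal T$ be the GLB of the outputs of $\gca_{r-1}$; these outputs are compatible by Convergence. Write each proposal as $t\cdot u_j\cdot cmd_j$. I need every command in $u_j$ and $cmd_j$ to belong to an operation that still takes steps; these commute pairwise by (b), which gives compatibility.

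The main obstacle is this claim. A command of a crashed process, or of an operation that is pending but no longer stepping, could still sit in the suffix $u_j$ and conflict with live commands. I would handle it in two ways. First, commands present in $\alpha$ but not in $s'$ can only re-enter a trace through their own invoker. All later traces derive from GCA outputs at rounds $\ge r_1$, and by the Common prefix and Validity properties these contain only operations proposed at those rounds. Since $i$ was alone at $\gca_{r_1}$, round $r_1$ contained only $s'$; any operation outside $s'$ must have been appended afterwards by its invoker taking steps after $\alpha'$. Second, a process that stops after $\tau$ contributes its command only finitely often. Taking $r_2$ beyond every round reached by a crashed process, and applying Lemma~\hyperlink{lemmaA1}{\ref*{lemma:prefix-rounds}} so that committed or completed commands fall into $t$, ensures the suffixes contain only commands of live stepping operations. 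The remaining risk is a slow process that is stalled but correct; I would argue it eventually steps again, so it counts as taking steps in the suffix and is covered by (b).

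Given the invariant, the conclusion follows the proof of Theorem~\ref{theorem:weakUCWCF}. At every round $r>r_2$ all participants return and the inputs are compatible, so by Commitment some process $j$ commits an extension of its input. That input contains $cmd_j$, so $j$ completes its operation. Hence operations keep completing infinitely often. Since there are finitely many processes, some correct process completes infinitely many operations. Because that process has only finitely many non-completing operations overall, it is a correct process that completes each of its operations, contradicting the assumption. Therefore $\alpha'$ is conflict-forgetting.
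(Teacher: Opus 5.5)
Your architecture is the paper's (a solo checkpoint at a fresh GCA, a compatibility invariant beyond some round, then Commitment). However, the step you flag as the main obstacle is not established, and the justification you give for it does not hold.

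The sentence ``since $i$ was alone at $\gca_{r_1}$, round $r_1$ contained only $s'$'' is true only inside $\alpha'$. In $\alpha''$, correct processes that were mid-loop below round $r_1$ at the end of $\alpha$ resume and later propose to $\gca_{r_1}$ themselves. Their traces are built from lower-round outputs. By Validity, those outputs may contain the command $c_p$ of a process $p$ that wrote its input to some $\gca_r$ with $r\le r_0$ and then crashed in $\alpha$. At $\gca_{r_1}$, Adoption only forces the late outputs to extend $s'$. Validity and Common prefix permit, e.g., $s'\sqcup x$ for a late input $x$ compatible with $s'$ and containing $c_p$.

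So, as far as the six GCA properties go, $c_p$ re-enters without its invoker taking a step. They even allow a slow process $k$ to adopt $s'\cdot c_p\cdot cmd_k$ at every later round, while a process $l$ keeps adopting $s'$ and proposing $s'\cdot cmd_l$. Take $cmd_k$ and $cmd_l$ commuting but $cmd_l\asymp c_p$. Then the inputs are never compatible and Commitment never fires, although all operations that step after $\alpha'$ commute. Validity plus Common prefix therefore cannot close this step. The paper's written proof also cites only Validity here.

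What closes it is a property of Algorithm~\ref{alg:GCAFV2}. It is the one behind the paper's remark that after a solo access every other process adopts the solo trace:
\begin{itemize}
\item $i$ wrote $(s',\True)$ into $B$ of $\gca_{r_1}$ before anyone else touched that instance, so every later $B$-snapshot contains it.
\item Every later caller therefore takes the first branch of Line~\ref{line:gca:6} and gets $\beta\le s'$.
\item Adoption gives $\beta\ge s'$, so $\beta=s'$.
\end{itemize}
Hence every output of $\gca_{r_1}$, whenever it is returned, equals $s'$. Since $S$ already holds round $r_1$, induction over rounds $>r_1$ with Validity shows that every command outside $s'$ at those rounds was appended at Line~\ref{line:uc:8} by its own invoker after $\alpha'$.

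Three smaller repairs are needed.
\begin{enumerate}
\item Your ``second'' argument is wrong. The command of a process that steps after $\alpha'$ and then crashes can stay uncommitted in others' traces forever, and Lemma~\ref{lemma:prefix-rounds} only moves committed commands into $t$. The argument is also unnecessary: such an operation took steps after $\alpha'$, so it is already covered by the weakly $\alpha'$-conflict-free hypothesis.
\item Take $r_2$ beyond every round at which some operation commits. There are finitely many such rounds under your contradiction hypothesis, as the paper arranges with $\tau'$. Otherwise completed commands need not lie in $t$.
\item Finishing $i$'s pending operation plus one new one need not reach a round above $r_0$, since either may commit at an old round. Let $i$ keep invoking operations. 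No entry of $S$ exceeds $r_0$ until $i$ commits above $r_0$, so $i$ eventually accesses $\gca_{r_0+1}$ alone and commits there.
\end{enumerate}
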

\begin{proof}
    Let $\alpha$ be a finite execution of Algorithm~\ref{alg:UnivConstructionGCA}.
    We extend $\alpha$ by letting some process $i$ to run in the absence of step contention as in the proof of Theorem~\hyperlink{theorem64}{\ref*{th:cr}} for sufficiently many steps such that it invokes $\gcapropose{r^*}{s}$ returns from it, and no other process has accessed $\gca_{r^*}$ in $\alpha$.
    Since $i$ is the sole participant of $\gca_{r^*}$ in the execution, the invocation returns $(s,\True)$.
    Let $\alpha'$ denote the described extension of $\alpha$.
    We now show that $\alpha'$ is conflict-forgetting.
    Consider any infinite extension $\alpha''$ of $\alpha'$ that is eventually weakly $\alpha'$-conflict-free.
    Suppose, for the sake of contradiction, that no process completes all its operations in $\alpha''$.
    Since $\alpha''$ is eventually weakly $\alpha'$-conflict-free, there exists a time $\tau$ after which no two pending operations instances that take steps in  $\alpha''$ are conflicting. 
    Moreover, there exists a time $\tau' \geq \tau$ after which only correct processes take steps and all their operations are pending forever. 
    Let $r_0$ be the highest round number reached by some process at time $\tau'$.
    We prove the same invariant as in Theorem~\ref{theorem:weakUCWCF} holds: (I) for any $r >r_0$, only compatible traces are proposed to $\gca_r$.
    By the same argument as the proof of Theorem~\ref{theorem:weakUCWCF}, this implies that some process completes its operation after time $\tau'$, a contradiction.

    It remains to prove that invariant (I) holds. 
    Consider some round $r \geq r_0$.
    Given some correct process $j$, let $cmd_j$ be the command associated with the pending forever operation instance of process $j$.
    Let $t_j$ and $s_j$ be the traces $j$ retrieves from $\gca_{r-1}$ and proposes to $\gca_r$, respectively.
    In the same vein as the proof of Theorem~\ref{theorem:weakUCWCF}, we establish that $\mathcal{S}= \bigcup_j\{s_j\}$ is compatible.
    We have that $s_j = t \ccdot u_j \ccdot cmd_j$ if $cmd_j \notin \ops{t\ccdot u_j}$ and $s_j = t \ccdot u_j$ otherwise; where $t=\bigsqcap \bigcup_j \{t_j\}$.
    Since $r^* \leq r_0$ and by Lemma~\hyperlink{lemmaA1}{\ref*{lemma:prefix-rounds}}, $s \leq t$. 
    Thus, by GCA Validity, the set $\bigcup_j \ops{u_j} \cup \bigcup_j \{cmd_j\}$ contains only commands associated to operation instances which were appended at Line~\ref{line:uc:8} and proposed it to some $\gca_{r'}$ instance such that $r'>r_0 \geq r^*$.
    As a result, $\bigcup_j \ops{u_j} \cup \bigcup_j \{cmd_j\}$ only contains pending operations which took a step in $\alpha''$.
    Since we are at a time $\tau'\geq \tau$, these operations do not conflict with each other. 
    Therefore, $\mathcal{S}= \bigcup_j\{s_j\}$ is compatible.
\end{proof}

\section{Discussion and Future Work}
\label{sec:conclusion}

%
%
%

%
%

\paragraph{Boosting progress with contention managers.}
While conflict-freedom offers a flexible middle ground between obstruction-freedom and stronger liveness guarantees, some applications may require progress even under concurrent conflicting operations.
In such a setting, the universal construction (Algorithm~\ref{alg:UnivConstructionGCAV2}) can be extended with a \textit{contention manager} that resolves conflicts by enforcing an order on competing commands.
There is a rich literature on contention management and conflict resolution strategies~\cite{Guerraoui05CMTheory, Schneider09Bounds, agarwal89exponential}.
These approaches either rely on stronger synchronization primitives, such as consensus objects~\cite{Her91}, or relax liveness guarantees enabling adaptive techniques such as exponential back-off~\cite{agarwal89exponential}.

\paragraph{Bounded conflict-freedom.}
One can turn our universal constructions into \emph{bounded} (weakly) conflict-free, by introducing an upper bound $b$ (a polynomial in $n$) on the number of steps a process can take in the absence of conflicts before committing its operation. 
Bounded versions of our Algorithms~\ref{alg:UnivConstructionGCA} and~\ref{alg:UnivConstructionGCAV2} can be built by making the processes, in every attempt to commit a value, to share the outcomes of the last GCA instance they accessed and to adopt directly the outcome of the most recently accessed instance and then proceed directly to the next one (updating $r$ in lines~\ref{line:uc:6} and~\ref{line:cfuc:inc}, resp.).   %
The resulting algorithm would ensure also that an operation terminates if it takes at most $b$ steps in isolation.  
Hence, once a process has been given a chance to run solo for $b$ steps, operations invoked beforehand are included in $\asymp$ only if they continue taking steps---processes that failed cannot affect progress anymore.  

\ignore{
Notably, our conflict-free universal construction can itself establish a synchronization point when a process returns from $\gcapropose{r}{s}$ before any other process invokes $\gcapropose{r}{\cdot}$ (Lemma~\ref{lemma:solo-round}). 
In this case, every subsequent invocation returns the same trace~$s$.
From that point onward, non-conflicting operations complete without further coordination, and conflicting operations can delay progress only if they are later reproposed.
}

\ignore{
Conflict-freedom is a promising direction for implementing efficient concurrent and distributed data structures, particularly in the context of state machine replication (SMR).
In future work, we aim to extend Algorithm~\ref{alg:GCAFV2} to the crash-failure message-passing model, enabling efficient SMR protocols.
A further avenue is the exploration of conflict-freedom under Byzantine faults, adapting the algorithms and abstractions developed here to the Byzantine setting.
}
%

\paragraph{Beyond interval contention, static conflicts and commutativity.}
One might be tempted to strengthen our progress condition to only account for \emph{active} conflicts, i.e., for conflicting operations that concurrently take steps. 

The resulting progress condition of \emph{conflict-obstruction-freedom (COF)}~\cite{cof}
guarantees that an operation completes if it eventually executes without step contention with conflicting operations.
This is a very attractive property, as it naturally takes care of conflicts inflicted by faulty processes. 
%
%
Unfortunately, via a reduction to the consensus impossibility~\cite{flp,LA87}, it is shown in~\cite{cof} that, in a system of three or more processes, there are no COF read-write universal constructions.
This opens an intriguing question on the synchronization power~\cite{Her91} of a COF universal construction.

Conflicts can be defined dynamically: two operations conflict only if they do not commute in any ``current'' object's state, i.e., a state derived from a possible linearization of the current execution~\cite{dynamicConf}, yielding a dynamic notion of conflict-freedom.
Also, one may account for semantic relations beyond commutativity.
Prior work shows that overwriting operations do not require synchronization~\cite{pram}; accordingly, operations could be treated as non-conflicting when they commute or when one overwrites the other. 
Generalizing conflict-free universal constructions to these relaxations of the conflict relation is left for future work.

%

\section{Acknowledgments}
This work was supported by the CHIST-ERA European initiative
(grant CHIST-ERA-22-SPiDDS-05, REDONDA project), and by
Agence Nationale de la Recherche (grant ANR-23-CHR4-0009).

\bibliographystyle{ACM-Reference-Format}

\bibliography{references}

\appendix

\section{Full Proofs} \label{sec:proofs}

\subsection{Relationships Among Liveness Conditions}

\ProgressHierarchy*
\begin{proof}
Fix an infinite execution $\alpha$ and let $\Phi$ be any operation instance invoked by a correct process in $\alpha$.

\textit{(1) wait-freedom $\implies$ conflict-freedom}.
Assume the implementation is wait-free.
By definition, every operation invoked by a correct process completes in every infinite execution.
Thus, $\Phi$ completes in $\alpha$.
In particular, $\Phi$ completes whenever $\Phi$ is eventually step-contention-free in $\alpha$, and also whenever $\alpha$ is eventually conflict free.
Therefore, the implementation is conflict-free.

\textit{(2) conflict-freedom $\implies$ weak conflict-freedom.}
Assume the implementation is conflict-free.
%
%
If $\Phi$ is step-contention-free in $\alpha$, then $\Phi$ completes.
If $\alpha$ is eventually conflict-free, then every operation instance invoked by any correct process---and in particular $\Phi$---completes.
Hence, any correct process completes all of its operation instances.

\textit{(3) weak conflict-freedom $\implies$ obstruction-freedom.}
Assume the implementation is weakly conflict-free.
By definition, $\Phi$ completes if it is eventually step-contention-free in $\alpha$, which is precisely the definition of obstruction-freedom.

Since $\alpha$ and $\Phi$ were arbitrary, the hierarchy follows.
\end{proof}

\CFDegenerate*
\begin{proof}
Fix an arbitrary infinite execution $\alpha$.
Recall that $\alpha$ is \emph{eventually conflict-free} if it has a suffix in which no two concurrent operation instances $\Phi=(o,i)$ and $\Phi'=(o',j)$ satisfy $o \asymp o'$.

\textit{Case $\asymp = O \times O$.}
If $\alpha$ is eventually conflict-free, there exists a suffix $\alpha'$ with no concurrent operation instances; otherwise any pair of concurrent operations would conflict.
Thus, every operation instance invoked by a correct process is eventually step-contention-free in $\alpha$ (it either completes or runs solo in $\alpha'$).
By definition of conflict-freedom, an operation instance must complete whenever either (I) it is eventually step-contention-free, or (II) the execution is eventually conflict-free.
Since (II) implies (I), conflict-freedom reduces to requiring operations to complete under (I) alone, which is exactly obstruction-freedom.
The same simplification applies to weak conflict-freedom.

\textit{Case $\asymp = \emptyset$.}
Since no pair of operations conflict, $\alpha$ is eventually conflict-free. 
If the implementation is conflict-free, every operation instance invoked by a correct process completes in $\alpha$, which is exactly wait-freedom.
The converse follows from Proposition~\ref{prop:progress-hierarchy}.

If the implementation is weakly conflict-free, some process completes all of its operations in $\alpha$, which is exactly lock-freedom.
Conversely, assume the implementation is lock-free. 
Let $\Phi$ be an operation instance of a correct process which is eventually step-contention-free in $\alpha$.
If $\Phi$ did not complete, there would exist a suffix in which $\Phi$ is pending and no other operation takes steps, contradicting lock-freedom. 
Hence every eventually step-contention-free operation completes.
Moreover, lock-freedom ensures that some process completes all of its operations in $\alpha$. 
Therefore, the implementation is weakly conflict-free.

\end{proof}

\subsection{Well-Defined Return Values on Traces}

\RetTraceEq*
\begin{proof}
    From the definition of a trace \cite{aalbersberg1988theory}, it suffices to consider the case where $s=x\ccdot o \ccdot o' \ccdot x'$ and $s' = x \ccdot o' \ccdot o \ccdot x'$, with $x,x' \in O^*$ and $o' \not\asymp o$, since equivalent schedules differ by a finite sequence of swaps of adjacent commuting operation occurrences.
    If $o$ and $o'$ are occurrences of the same operation, then swapping them does not change the schedule, and the claim is immediate.
    Assume henceforth $o$ and $o'$ are occurrences of different operations.
    Let $q_x=\st(\sigma^*(x,q_0)[|x|])$ be the state reached after executing $x$.
    Write $(r,q)=\sigma(o,q_x)$ and $(r',q')=\sigma(o',q)$.
    By definition of $\sigma^*$, $\sigma^*(s,q_0) = \sigma^*(x,q_0) \cdot (r,q) \cdot (r',q') \cdot \sigma^*(x',q')$.
    Since $o \not\asymp o'$, there exists a state $q''$ such that $\sigma(o',q_x)=(r',q'')$ and $\sigma(o,q'')=(r,q')$.
    Therefore 
    \begin{equation}
        \sigma^*(s',q_0) = \sigma^*(x,q_0) \cdot (r',q'') \cdot (r,q') \cdot \sigma^*(x',q')
    \end{equation}
    Recall that the occurrence $o$ in appears at position $|x|+1$ in $s$ and at position $|x|+2$ in $s'$. 
    Hence, by Equation (1) and the definition of $\ret^*$, $\ret^*(o,s)=\ret(\sigma^*(s,q_0)[|x|+1])=\ret(\sigma^*(s',q_0)[|x|+2])=\ret(o,s')$.
    The same argument applies to $o'$.
\end{proof}

\RetTracePrefix*
\begin{proof}
Since $s \leq t$, there exists $u\in O^*/\!\sim$ such that $t=s \cdot u$.
Fix an occurrence $o^{(i)}$ in $s$.
By definition of $\sigma^*$ on traces, $\sigma^*(s,q_0)$ is a prefix of $\sigma^*(t,q_0)$.
Hence, the return value of $o^{(i)}$ is identical in $\sigma^*(t,q_0)$ and $\sigma^*(s,q_0)$, and therefore $\ret^*(o^{(i)},s)=\ret^*(o^{(i)},t)$.
\end{proof}

\subsection{Correctness of the Weakly Conflict-free Universal Construction}
\label{sec:appendix:weak_uc}

In what follows, we first state a technical lemma, then we prove that the universal construction in Section~\ref{sec:weak_uc} is both linearizable and weakly conflict-free.

\hypertarget{lemmaA1}{}
\begin{lemma}
\label{lemma:prefix-rounds}
    Fix an execution of Algorithm~\ref{alg:UnivConstructionGCA}.
    If $(s,\True)$ is returned by $\gcapropose{r}{\cdot}$ for some round $r\geq 1$, then for every round $r'\geq r$, every trace $t$ returned by $\gcapropose{r'}{\cdot}$ satisfies $s \leq t$.
\end{lemma}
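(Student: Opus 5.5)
We argue by induction on $r' \geq r$, proving the stronger claim that every trace \emph{returned} by $\gcapropose{r'}{\cdot}$ extends $s$, and, as an auxiliary invariant for $r' > r$, that every trace \emph{proposed} to $\gca_{r'}$ extends $s$.

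\textbf{Base case $r'=r$.} This is immediate from the Adoption property of $\gca_r$: since some participant returned $(s,\True)$, every output trace $t$ of $\gca_r$ satisfies $s \leq t$.

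\textbf{Proposals to $\gca_{r'+1}$.} Assume every output of $\gca_{r'}$ extends $s$. In Algorithm~\ref{alg:UnivConstructionGCA}, a process invokes $\gcapropose{r'+1}{\cdot}$ only after incrementing its local $r$ at Line~\ref{line:uc:6}. Its local pair $(r,s)$ at that point was obtained in one of two ways: from the previous loop iteration, or from Line~\ref{line:uc:4}. In the first case, the local trace is the output of $\gca_{r'}$, which extends $s$ by the induction hypothesis. Line~\ref{line:uc:8} only appends a command, so the proposal $x\cdot cmd$ still satisfies $s\leq x\leq x\cdot cmd$.

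\textbf{The main obstacle: proposals obtained via Line~\ref{line:uc:4}.} A process that starts a new operation reads the pair $(r'',s'')$ with maximum round from $S$ and jumps to round $r''+1$. Values written to $S[j]$ at Line~\ref{line:uc:10} are of the form $(r'',s'')$, where $s''$ is the output (committed) of $\gca_{r''}$. The process then proposes to $\gca_{r''+1}$ a trace extending $s''$. If $r''+1 = r'+1$, then $r''=r'\geq r$ and $s''$ is an output of $\gca_{r'}$, hence extends $s$ by induction. The pair $(0,\epsilon)$ cannot occur here, since then $r''+1 = 1 \leq r$. Thus every proposal to $\gca_{r'+1}$ extends $s$. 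The point to verify carefully is that the round proposed to is exactly one more than the round whose output is used, so the induction is indexed consistently. No other code path reaches $\gca_{r'+1}$.

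\textbf{Outputs of $\gca_{r'+1}$.} All inputs extend $s$, so $s \leq \bigsqcap_{j} s_j$. By the Common prefix property, every output $t$ of $\gca_{r'+1}$ satisfies $\bigsqcap_j s_j \leq t$, and hence $s\leq t$. This completes the induction. If $\gca_{r'+1}$ has no participants, the claim holds vacuously.
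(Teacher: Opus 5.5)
Your proof is correct and follows essentially the same route as the paper: induction on $r'$, with Adoption for the base case, the observation that every proposal to $\gca_{r'+1}$ extends an output of $\gca_{r'}$ (taken either from the previous loop iteration or from a pair $(r',\cdot)$ read from $S$ at Line~\ref{line:uc:4}), and Common prefix to conclude. Your explicit handling of the initial pair $(0,\epsilon)$ and of the round indexing is a small tightening of details the paper leaves implicit.
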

\begin{proof}
    We prove the claim by induction on $r' \geq r$.
    \textit{Base case $(r'=r)$.}
    By the Adoption property of GCA, if $(s,\True)$ is returned in round $r$, then every trace $t$ returned by $\gcapropose{r}{\cdot}$ satisfies $s \leq t$.
    \textit{Inductive step.}
    Assume the claim holds for some round $r'\geq r$.
    We prove it for round $r'+1$.
    Consider any trace $x$ proposed to $\gca_{r'+1}$.
    By Algorithm~\ref{alg:UnivConstructionGCA}, $x$ either extends the trace returned in $\gcapropose{r'}{\cdot}$, or $(x,r')$ is read in Line~\ref{line:uc:4} from the array $S$; in both cases a command may be appended at the end (Line~\ref{line:uc:8}).
    In the former case, $x$ extends the trace returned by $\gcapropose{r'}{\cdot}$ in the previous round. 
    In the latter case, if a process reads $(x,r')$ at Line~5, then some process must have written $(x,r')$ to $S$ after receiving it from $\gcapropose{r'}{\cdot}$.
    In both cases, $x$ is a trace returned by $\gcapropose{r'}{\cdot}$, and therefore, by the induction hypothesis, it extends $s$. Hence $s \leq x$.

    Thus, every input trace proposed to $\gcapropose{r'+1}{\cdot}$ extends $s$.
    By the Common-prefix property of $GCA$, every trace returned by $\gca_{r'+1}$ extends the GLB of the inputs, which itself extends $s$.
    Therefore $s \leq t$ for every trace returned in round $r'+1$~
\end{proof}

Recall that a history $H$ is linearizable if
it has  a completion $\bar H$ and a legal sequential history $S$ such that \emph{(i)} $\bar H |_i = S|_i$ for every process $i$, and \emph{(ii)}~the real-time order of $H$ is preserved in $S$. 
An implementation is linearizable if every execution produces a linearizable history.
As linearizability is a safety property for deterministic objects~\cite{Lyn96}, to show that an implementation is linearizable, it is sufficient to show that every finite history it produces is linearizable.

\WeakUCLin*
\begin{proof}
Let $\alpha$ be a finite execution of Algorithm~\ref{alg:UnivConstructionGCA} and $H(\alpha)$ be its associated history. 
%
%
We show that $H$ is linearizable. 
%

According to Line~\ref{line:uc:3} in Algorithm~\ref{alg:UnivConstructionGCA}, every invocation of some operation $\Phi$ creates a unique corresponding command, written hereafter $cmd(\Phi)$.
Operation $\Phi$ completes only after committing some trace $s$ from a call to $\gcapropose{r}{\cdot}$, with $cmd(\Phi) \in \ops s$ and $r\geq 1$.
The value returned from invoking $\Phi$ is computed in Line~\ref{line:uc:11} and equals $\ret^*(cmd(\Phi),s)$.
Let $r^*$ be the largest round in which some trace is committed in $\alpha$.
By Adoption, there is a single such trace $t$ at $r^*$.

We show the existence of a representative schedule $\hat t \in t$ such that for any two distinct operations $\Phi_1 \preceq_{H} \Phi_2$, $cmd(\Phi_1) \preceq_{\hat t} cmd(\Phi_1)$.
First of all, by Lemma~\hyperlink{lemmaA1}{\ref*{lemma:prefix-rounds}}, $cmd(\Phi_1)$ and $cmd(\Phi_2)$ are always in a representative of $t$.
The representative $\hat{t}$ is built as follows:
For any round $k$ at which some trace is committed, let $t_k$ be this unique trace.
By convention, fix $t_0=\epsilon$ and $t_{k+1}=t_k$ if no trace is committed at round $k+1$.
By Lemma~\hyperlink{lemmaA1}{\ref*{lemma:prefix-rounds}}, there exists a family $(s_k)_{k \in [0,r^*]}$ such that $t_{k+1}=t_k \cdot s_k$.
It follows that $t=t_{r^*}$ and $t=s_1 \cdot \ldots \cdot s_{r^*}$.
From what precedes, for every process $i$ and operation $\Phi \in H$ invoked by $i$, there exists a unique $t_k$ such that $i$ returns from its invocation of $\Phi$ after committing $t_k$.
If the call does not return, by convention we associate $t_{r^*}$ to $cmd(\Phi)$.
Our representative $t$ is the concatenation (in order) of the representatives of each $s_k$.
That is, denoting $\hat{s}_k$ some representative of $s_k$, $\hat{t}=\hat{s}_1 \ldots \hat{s}_{r^*}$.
We observe that if $\Phi_1 \preceq_{H} \Phi_2$, denoting $t_k$ and $t_{k'}$ traces associated to commands $cmd(\Phi_1)$ and $cmd(\Phi_2)$, then $k < k'$, as required.

Define the sequential history $S_{\hat t}$ as the alternating invocation/response sequence obtained by executing $\hat t= cmd(\Phi_1)\cdots cmd(\Phi_{|\hat t|})$ from $q_0$ using $\sigma^*$.
%
By definition of $\sigma^*$, $S_{\hat t}$ is legal.
By construction, $\preceq_{H} \subseteq \preceq_{S_{\hat t}}$.

Now fix any completed operation instance $\Phi$.
Let $s$ be the trace obtained in Line~\ref{line:uc:10}. 
By Lemma~\hyperlink{lemmaA1}{\ref*{lemma:prefix-rounds}}, $s \leq t$.
Thus by Lemmas~\ref{lemma:ret-trace-prefix} and~\ref{lemma:ret-trace-eq}, $\ret^*(cmd(\Phi),s)=\ret^*(cmd(\Phi),t)=\ret^*(cmd(\Phi),\hat t)$.
By definition of $S_{\hat t}$, $\ret^*(cmd(\Phi),\hat t)$ is exactly the response of $\Phi$ in $S_{\hat t}$.
%

It remains to define a completion $\bar{H}$ of $H$.
To this end, consider the pending operations in $H$.
For each such operation $\Phi$, if $cmd(\Phi) \in \ops{t}$, we append to $H$ a matching response event returning $\ret^*(cmd(\Phi),t)$.

By construction, for every process $i$, $\bar{H}|i = S_{\hat{t}}|i$.
Also, $S_{\hat{t}}$ is legal and $\preceq_{H} \subseteq \preceq_{S_{\hat t}}$.
%
Thus, $S_{\hat{t}}$ is a linearization of $H$.
\end{proof}

\WeakUCWCF*
\begin{proof}
Let $\alpha$ be some infinite execution that is eventually conflict-free.
By definition, there exists a point in time $\tau$ after which no two pending operations are conflicting.
Let $r_0$ be the largest round number stored in $S$ at time $\tau$.
Then, for the sake of contradiction, suppose no process completes all its operations in $\alpha$.
It follows that there exists a time $\tau' \geq \tau$ after which only correct processes take steps and all their operations are pending forever.

Assume that the following invariant holds:
(I) for every $r>r_0$, only compatible traces are proposed to $\text{GCA}_{r}$.
Let $r^* \geq r_0$  be the largest round reached by any process by time $\tau'$.
By invariant (I), all proposals to $\gca_{r^*+1}$ are compatible.
By Commitment, there exists a correct process and an operation instance $\Phi$ such that this process invokes $\gcapropose{r^*+1}{s}$, with $cmd(\Phi) \in s$, and returns $(x, \True)$ such that $s \leq x$.
Thus, the process invoking $\Phi$ exits the while loop at Line~\ref{line:uc:5}, completing its operation---a contradiction.

In what follows, we prove that invariant (I) holds.
Given some correct process $i$, let $cmd_i$ be the command containing the forever-pending operation of $i$. 
Consider some round $r > r_0$.
Let $t_i$ and $s_i$ be the traces $i$ retrieves from $\gca_{r-1}$ and proposes to $\gca_{r}$, respectively.
According to Lines~\ref{line:uc:7} and \ref{line:uc:8}, $s_i = t_i \cdot cmd_i$ if $cmd_i \notin \ops {t_i}$, and $s_i=t_i$ otherwise.
We establish that $\mathcal{S} = \bigcup_i \{s_i\}$ is compatible. 
By the Convergence property, $\mathcal{T}= \bigcup_i \{t_i\}$ is compatible.
Let $t$ be the common prefix of these traces, that is $t = \bigsqcap \mathcal{T}$.
Given a correct process $i$, $u_i$ is the (unique) trace such that $t_i = t \cdot u_i$.
By the Commitment and Common prefix properties, $u_i$ contains only commands from pending operations.
Indeed, if some operation $\Phi$ has already returned, then $cmd(\Phi)$ must be in some committed trace.
By Lemma~\hyperlink{lemmaA1}{\ref*{lemma:prefix-rounds}}, every such trace prefixes $t_i$.
Thus, $cmd(\Phi)$ is included in the prefix $t$.
By definition, every command $cmd_i$ must also be pending.
Thus, the set $\bigcup_i \ops{u_i} \cup \bigcup_i \{cmd_i\}$ only contains pending operations.
Since we are at a time $\tau' > \tau$, these operations do not conflict between each other.
%
It follows that $\mathcal{S} = \bigcup_i \{t_i \cdot cmd_i : cmd_i \notin \ops{t_i}\} \cup \bigcup_i\{t_i : cmd_i \in \ops {t_i}\}$ only contains compatible traces, as required.

It remains to show that for any infinite execution $\beta$ and any operation instance $\Phi$ of a correct process $i$ that is eventually step-contention-free in $\beta$, $\Phi$ completes.
By definition, there exists a time $\tau$ after which only $\Phi$ takes steps in $\beta$.
Let $r^*$ be the largest round reached by any process by time $\tau$.
Since $i$ continues executing the loop in Lines~\ref{line:uc:5}-\ref{line:uc:10}, it eventually invokes $\gcapropose{r^*+1}{s}$ with $cmd(\Phi)\in \ops{s}$.
As $i$ is the only participant in $\gca_{r^*+1}$, by the Commitment property, the call returns $(s,\True)$.
Thus, the process exits the while loop and $\Phi$ completes.
\end{proof}



\subsection{Correctness of the GCA implementation}
\label{sec:appendix:gca}

We now prove the correctness of Algorithm~\ref{alg:GCAFV2}. 
Specifically, we aim to show that it satisfies all the six properties that define GCA in Section~\ref{sec:gca}. 
Each property is captured in a separate lemma, and collectively they establish the main correctness theorem.
As introduced in Section~\ref{sec:gca}, $P \subseteq \Pi$ denotes the participating processes.

We first extend the definitions of $\comp{}$, $\bigsqcap$, and $\bigsqcup$ to admit the distinguished value $\bot$.
For every finite set $S$ of traces and $\bot$ values, we extend the operators as follows:
$\comp{S}$ and $\bigsqcup S$ are computed on $S\!\setminus\!\{\bot\}$, and $\bigsqcap S = \bot$ whenever $\bot \in S$.
By convention, if $S \setminus \{\bot\} = \emptyset$, then $\comp{S}$ holds and $\bigsqcup S = \epsilon$.

\hypertarget{lemmaA2}{}
\begin{lemma}[Adoption] \label{lemma:GCA_adoption}
If a process $i$ commits a trace $t_i$, then no process commits or adopts a trace $t_j$ such that $t_i \not\leq t_j$. 
\end{lemma}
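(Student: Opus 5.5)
The plan is to fix a committing process $i$ with output $(\beta_i,\True)$ and show that every returned $\beta_j$ satisfies $\beta_i \leq \beta_j$. Since $\beta_j$ is either $\bigsqcap$ of the candidates $b_k$ with $c_k=\True$ in $B_j$, or $b_j$ itself, it suffices to prove the following invariant: \emph{every candidate $b_k \neq \bot$ that any process observes in $B$, and every $b_j$ used in the second case of Line~\ref{line:gca:6}, extends $\beta_i$.} Then $\beta_i$ is a lower bound of the set whose GLB is taken, so $\beta_i \leq \beta_j$. Throughout I will use the containment property of snapshots: the views $A_1,\dots$ form a chain under inclusion, and so do the views $B_1,\dots$.

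I would split on the reason $i$ commits. \textbf{Case 1: the second disjunct of Line~\ref{line:gca:8} holds.} Then every flag in $B_i$ is $\True$, in particular $c_i=\True$, so $A_i$ is compatible, $b_i=\bigsqcup A_i$ and $\beta_i \leq b_i$. Every entry already present in $B_i$ is a candidate, and $\beta_i$ is its GLB, so those entries extend $\beta_i$. For a process $j$ whose entry is absent from $B_i$, I would argue in three steps.
\begin{enumerate}
\item $j$'s write to $B$ happens after $i$'s snapshot of $B$.
\item By the commit condition, $a_j$ is either absent from $A_i$ or satisfies $a_j \not\leq \beta_i$; this is exactly where the clause ``$a_k \leq \beta_i \implies b_k\neq\bot$'' is used.
\item Hence $A_i \subseteq A_j$: either $j$'s snapshot of $A$ follows $i$'s (when $a_j\notin A_i$), or a separate ordering argument gives the same inclusion (when $a_j \not\leq \beta_i$).
\end{enumerate}
I would then prove a monotonicity lemma. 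If $A_i$ is compatible, $A_i \subseteq A_j$, and every entry of $A_i$ comparable with the GLB structure survives, then $\bigsqcup A_j^{\mathsf{co}} \geq \beta_i$. The idea is that each $a_k^{\mathsf{co}}$ for $a_k \in A_i$ is a GLB involving $a_k$ and traces incompatible with it, and such a GLB still extends the common candidate prefix $\beta_i$. The reason is that $\beta_i \leq b_l$ for every process $l$ in $B_i$, and those $l$ wrote traces built from views containing the same inputs.

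\textbf{Case 2: $w_i$ holds.} Then every non-$\bot$ entry of $A_i$ and $B_i$ equals $s_i$, hence $\beta_i=s_i$. Any $j$ whose $A$-view is contained in $A_i$ sees only $s_i$, so $b_j = s_i$. Any $j$ with a larger view must write to $B$ after $i$'s $B$-snapshot, and I would reduce this to the same monotonicity lemma with $\beta_i=s_i$.

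The main obstacle is the monotonicity lemma. The compatibility-repair operation $a_k^{\mathsf{co}}$ can \emph{shrink} traces when new, incompatible inputs appear in a larger view $A_j$, so a priori $\bigsqcup A_j^{\mathsf{co}}$ need not extend $\beta_i$. My first attempt would be to show that any later input $a_m$ incompatible with some $a_k$ still has $\beta_i \leq a_m \sqcap a_k$. If that fails, I would instead show that such an $a_m$ forces $c_j=\False$, so $b_j$ never enters any GLB of candidates. I would then handle $j$'s own fallback $\beta_j=b_j$: the committing candidate $b_i$ with flag $\True$ is visible to $j$ by containment whenever $j$'s $B$-view contains $B_i$, so $j$ takes the candidate branch rather than the fallback.
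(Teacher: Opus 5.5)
There is a genuine gap in step~3 of your Case~1. When $j$'s input already appears in $A_i$ with $a_j\not\leq\beta_i$, the inclusion $A_i\subseteq A_j$ is false in general, so no ordering argument can supply it. Take $n=3$ and pairwise commuting operations $x,y,z$ with $s_k=[x]$, $s_j=[y]$, $s_i=[z]$. Let $k$ write $A$, snapshot $\{s_k\}$ and write $(s_k,\True)$ to $B$. Then $j$ writes $A$ and snapshots $A_j=\{s_k,s_j\}$. Then $i$ writes $A$, snapshots $A_i=\{s_k,s_j,s_i\}$, writes $(\bigsqcup A_i,\True)$, and snapshots a $B_i$ that contains only the entries of $k$ and $i$. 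Now $\beta_i=s_k$, and the only input below $s_k$ is $a_k$, which has an entry in $B_i$, so $i$ commits. Afterwards $j$ writes the candidate $\bigsqcup A_j$, although $A_j\subsetneq A_i$. Your monotonicity lemma assumes $A_i\subseteq A_j$ and says nothing about such a $b_j$. (Similarly, in Case~2 a process with a larger $A$-view can write to $B$ \emph{before} $i$'s snapshot; that slip is harmless, because all entries of $B_i$ equal $s_i$ there.)

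The missing idea, which the paper's proof turns on, is that all flagged entries of $B$ are pairwise comparable. A raised flag means $A_k^{\mathsf{co}}=A_k$, so $b_k=\bigsqcup A_k$, and LUBs of nested views are nested. Hence $\beta_i$ is the least candidate in $B_i$, and every candidate is comparable with it. Suppose some candidate $b_j$ absent from $B_i$ satisfied $b_j<\beta_i$. Then $a_j\leq b_j<\beta_i$, so $a_j$ cannot appear in $A_i$, since the commit condition would have forced $b_j\neq\bot$ in $B_i$. Thus $A_i\subsetneq A_j$, and $b_j=\bigsqcup A_j\geq\bigsqcup A_i=b_i\geq\beta_i$, a contradiction. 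This is exactly how the paper disposes of a candidate $b_z<\beta_i$ in $B_j\setminus B_i$.

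With this in hand, your ``main obstacle'' disappears. Flagged entries are never shrunk by the repair step, and unflagged ones matter only through the fallback branch of Line~6. You correctly show that branch is never taken once some process commits.
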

\begin{proof}
    By contradiction, suppose that a process $i$ commits a trace $\beta_i$, while a process $j$ adopts or commits a trace $\beta_j$ with $\beta_i \not\leq \beta_j$. 

    
    Given a process $l$, let us define $B_l^c = \{b_k : (b_k,c_k) \in B_l \land b_k \neq \bot \land c_k = \True \}$.

    First of all, we observe that $\beta_i = \bigsqcap B_i^c = \bigsqcap B_i$.
    To prove this, there are two cases to distinguish:
    (Case $w_i=\False$)
    According to Line~\ref{line:gca:8}, every compatibility flag $c_k$ in $B_i$ must be raised, that is, set to $\True$.
    Consequently, process $i$ uses the first branch of the ``if'' in Line~\ref{line:gca:6} to compute the output trace $\beta_i$.
    According to the pseudo-code in this line, $\beta_i = \bigsqcap B_i^c$.
    (Case $w_i=\True$)
    In this situation, for every process $k$, every non-$\bot$ entry $a_k \in A_i$ or $b_k \in B_i$ must equal $s_i$, the input trace of $i$.
    Since the snapshot object is linearizable, necessarily no flag $c_k$ is set to $False$.
    Thus, this case reduces to the previous one.
    
    Next, we show that $B_i \subset B_j$. 
    First of all, since $B_i$ and $B_j$ are related by (snapshot) containment, it must be the case that either $B_i \subset B_j$, $B_j \subset B_j$, or $B_i = B_j$.
    (Case $B_i = B_j$)
    Then $\beta_i = \beta_j$, but $\beta_i \not\leq \beta_j$;
    a contradiction.
    (Case $B_j \subset B_i$) 
    Since $B_j \subset B_i$, $\beta_j = \bigsqcap B_j^c=\bigsqcap B_j \geq \bigsqcap B_i = \beta_i$;
    a contradiction.
       
    From what precedes, $B_i \subset B_j$.
    Thus $j$ takes the first branch of the ``if'' in Line~\ref{line:gca:6} and $\beta_j = \bigsqcap B_j^c$.
    This implies that $\beta_i > \beta_j$.

    If $\beta_i > \beta_j$, then there exists $(b_z,\True) \in B_j \setminus B_i$, with $\beta_i > b_z$.
    Indeed, since $\beta_i=\bigsqcap B_i^c$ and $\beta_j=\bigsqcap B_j^c$, and all entries in $B_i^c \cup B_j^c$ are pairwise comparable (being LUBs of containment-ordered snapshot views of $A$), having $b_k\geq \beta_i$ for every $(b_k,\True)\in B_j\setminus B_i$ would imply $\beta_j\ge\beta_i$, a contradiction. 

    Now we separate in two cases, whether $w_i = \True$ or $w_i = \False$,  and show that in both cases $A_z \not\subset A_i$.
    If $w_i = \False$, by the condition on Line~\ref{line:gca:8}, $a_z \not\leq \beta_i = \bigsqcup A_i$.
    Thus, $a_z=\bot$ in $A_i$. 
    By self inclusion on $A$, we have that $a_z=s_z$ in $A_z$, thus $A_z \not\subset A_i$.
    If $w_i=\True$, all non-$\bot$ entries in $A_i$ and $B_i$ are equal to $s_i$.
    As $a_z \leq b_z < \beta_i = s_i$, necessarily $a_z = \bot$ in $A_i$. Thus $A_z \not\subset A_i$.

    Now suppose $A_i \subset A_z$. 
    Since the flag $c_z$ is raised, $b_z = \bigsqcup A_z^\mathsf{co} = \bigsqcup A_z$.
    Moreover, all the traces in $A_z$, and thus also in $A_i$ are compatible.
    In such a case, $\beta_i = \bigsqcup A_i^{\mathsf{co}} = \bigsqcup A_i \leq \bigsqcup A_z$.
    However, by definition of trace $b_z$, $\beta_i > b_z$.
    A contradiction.

    As $A_i \not\subset A_z$ and $A_z \not\subset A_i$, necessarily $A_i = A_z$.
    Because the flag $c_z$ is raised, traces in $A_i$ are all compatible.
    Thus, $\beta_i = \bigsqcup A_i$.
    On the other hand, by definition of $b_z$, $\beta_i > b_z = \bigsqcup A_z =  \bigsqcup A_i$.
    We reach a contradiction.
\end{proof}


\begin{lemma}[Commitment]
\label{lemma:GCA_commit_input}
    In any execution where all the input traces are compatible and every participating process return, there is a process $i$ with input trace $s_i$ that commits a trace $t_i$ with $s_i \leq t_i$. 
\end{lemma}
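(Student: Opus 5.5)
We argue directly from the structure of Algorithm~\ref{alg:GCAFV2} under the hypothesis that all inputs are compatible.

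\textbf{Step 1: every process sees a compatible view.} Every non-$\bot$ entry of any snapshot $A_i$ is some input $s_k$, and the inputs are pairwise compatible. Hence the set $\{a_j : \neg\comp{\{a_j,a_k\}}\}$ in Line~\ref{line:gca:3} is empty for every $k$, so $a_k^{\mathsf{co}}=a_k$ and $A_i^{\mathsf{co}}=A_i$. Every process $i$ therefore writes $(\bigsqcup A_i,\True)$ to $B$. As a consequence, no compatibility flag in any view $B_i$ is $\False$, and every process computes $\beta_i=\bigsqcap B_i^c$ with the first branch of Line~\ref{line:gca:6}. Here $B_i^c$ is the set of non-$\bot$ entries of $B_i$, which is nonempty because it contains $b_i$ by self-inclusion.

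\textbf{Step 2: candidates are comparable.} Snapshots of $A$ are related by containment, and for compatible sets $X\subseteq Y$ we have $\bigsqcup X\le\bigsqcup Y$. Hence all written values $b_k=\bigsqcup A_k$ are pairwise comparable and ordered consistently with the containment of the $A$-views. In particular, $\beta_i=\min B_i^c$.

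\textbf{Step 3: choose the committer.} Among all participants (all of which return by hypothesis), pick a process $z$ whose snapshot $A_z$ is minimal in the containment chain; among ties, pick one whose $B$-snapshot is latest, i.e.\ largest. Then $b_z=\bigsqcup A_z$ is the smallest candidate ever written. Since $b_z\in B_z$ by self-inclusion and $\beta_z=\min B_z^c$, we get $\beta_z=b_z=\bigsqcup A_z\ge s_z$, because $s_z\in A_z$. This shows that $z$ outputs an extension of its input.

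\textbf{Step 4: $z$ commits.} It remains to check the commit condition of Line~\ref{line:gca:8}. The clause ``no $c_k=\False$'' holds by Step~1. For the other clause, take any $a_k\neq\bot$ in $A_z$ with $a_k\le\beta_z$; we must show $b_k\neq\bot$ in $B_z$. Process $k$ wrote $A[k]$ before $z$'s snapshot of $A$, and $A_k\supseteq A_z$ by minimality of $A_z$. The difficulty is ordering $k$'s write to $B$ before $z$'s snapshot of $B$. If $A_k=A_z$, then $k$ is tied with $z$, and choosing $z$ with the latest $B$-snapshot among ties handles this case, provided $k$'s write to $B$ precedes $k$'s own snapshot and that snapshot precedes $z$'s. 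This is not immediate, so I would rather choose $z$ as the process with minimal $A$-view whose $B$-snapshot is linearized last among all such processes, and argue by contradiction. Suppose the clause fails for some $k$. Then $k$ has not yet written $B$ when $z$ snapshots $B$, so $k$'s snapshot of $B$ comes later. Process $k$ also returns, and its $\beta_k$ is at most $b_z$, since $b_z$ is the smallest candidate and is already in $B$. So $\beta_k=b_z$, and we can replace $z$ by $k$ and repeat the argument, now checking whether $s_k\le\beta_k$. Because $k$'s $B$-snapshot is strictly later, this replacement process terminates. At the last process in the sequence, every $a_k\le\beta$ in its $A$-view has a written $B$ entry. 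The delicate point is to keep $s\le\beta$ for the replacing process. This holds because $a_k=s_k\le\beta_z=\beta_k$.

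I expect Step~4 to be the main obstacle. The plan is to run an extremal or iterative argument on the linearization order of the $B$-snapshots over the processes whose output equals the minimal candidate.
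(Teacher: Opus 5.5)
Your proof is correct and follows essentially the paper's argument. All flags are raised, so candidates form a chain and each output is the minimum of its $B$-view. Following a witness $k$ with $a_k\le\beta$ but $b_k=\bot$ then yields processes with strictly later $B$-snapshots that still output the minimal candidate, and this must stop within $n$ steps at a committer; the paper phrases the same step as a map $f$ on $P_\beta$ generating an impossible strictly increasing chain of $B$-views. Your invariant (the current process outputs the minimal candidate and $s_k=a_k\le\beta$) is slightly more careful than the paper's, which concludes $f(i)\in P_\beta$, i.e.\ that $f(i)$ writes $\beta$, when it only shows $f(i)$ outputs $\beta$; the initial tie-breaking choice of $z$ in Step~4 is unnecessary.
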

\begin{proof}
    Since all the inputs trace are compatible, for every process $k$, $A_k = A_k^\mathsf{co}$.
    According to Line~\ref{line:gca:4}, when $k$ writes to entry $B[k]$, it writes $(\bigsqcup A_k,\True)$.
    By containment on $A$, for any two processes $j$ and $k$, either $A_k \subseteq A_j$ or the converse holds.
    It follows that $\bigsqcup A_k \leq \bigsqcup A_j$ or $\bigsqcup A_k \geq \bigsqcup A_j$.
    Thus, all the traces written in $B$ are \emph{comparable} with respect to $\leq$.

    By containment on $B$, all the snapshots $(B_l)_{l \in P}$ taken in Line~\ref{line:gca:5} are comparable w.r.t. $\subseteq$.
    Since every entry in $B$ has its flag set to $\True$, when executing Line~\ref{line:gca:6} processes all take the first branch of the ``if''.
    It follows that all the traces $\{\beta_l  : l \in P\}$ computed in Line~\ref{line:gca:6} are comparable w.r.t. $\leq$.
    Let $\beta$ be the smallest such trace.    

    Let $P_\beta \subseteq P$ be the subset of processes that write $\beta$ into $B$.
    By definition of $\beta$ and since $P=P_r$ at least one process must do so, i.e., $P_\beta \neq \emptyset$.
    For each process $q \in P_\beta$, it is true that $\beta = b_q = \bigsqcup A_q \geq s_q$ (because the entries in $B$ are all comparable).
    Hence, the output trace of $q$ extends its input trace.

    It remains to prove that at least one process in $P_\beta$ commits.
    In what follows, we reason by contradiction, supposing instead that all of them adopt $\beta$.
    
    We define a function $f : P_\beta \rightarrow P$ such that for every $i \in P_\beta$, $f(i)=k$ if and only if $a_k \leq \beta_i$ in $A_i$ and $b_k = \bot$ in $B_i$. 
    If multiple such $k$ exist for a given $i$, we select one of them arbitrarily.
    Notice that, due to Line~\ref{line:gca:8}, since all the processes in $P_\beta$ adopt the trace $\beta$, such a $k$ must always exist.
    Thus function $f$ is well-defined.

    Fix some $i \in P_\beta$. 
    There exists $f(i) \in P$ such that $a_{f(i)} \leq \beta$ and $b_{f(i)} = \bot$ in $B_i$.
    By containment on $B$, we have $B_i \subset B_{f(i)}$.
    Then, $\beta = \bigsqcap B_i \geq \bigsqcap B_{f(i)} \geq \beta$. 
    Therefore $\bigsqcap B_{f(i)} = \beta$ and ${f(i)} \in P_\beta$.

    Repeating this argument $n+1$ times, results in a chain $B_i \subset B_{f(i)} \subset \dots \subset B_{f^{n+1}(i)}$. 
    Each $f^k(i)$ in this chain must be distinct and must belong to $P_\beta$ for  $1\leq k\leq n+1$.
    However, this yields a contradiction, since $|P| \leq n+1$.
    Thus, function $f$ cannot generate a strictly increasing chain of length $n+1$ within $P_\beta$.

    Therefore, at least one process in $P_\beta$ commits. 
    By definition of $P_\beta$, such a process commits the trace $\beta$ which extends its input.

\end{proof}

\begin{lemma}[Convergence]
Output traces are mutually compatible.
\end{lemma}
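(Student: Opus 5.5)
The plan is to reduce Convergence to a statement about what is written in $B$: \emph{all non-$\bot$ traces $b_k$ ever written to $B$ (Line~\ref{line:gca:4}) form a compatible set.} Given this, every output is a prefix of some written $b_k$. If $\beta_i$ is computed by the first branch of Line~\ref{line:gca:6}, it is the GLB of a nonempty set of written traces, so $\beta_i\le b_k$ for any such $k$. Otherwise $\beta_i=b_i$. Any set of prefixes of members of a compatible set is compatible, since the common upper bound $\bigsqcup_k b_k$ also bounds the prefixes. Hence $\comp{\bigcup_{i\in P_r}\{\beta_i\}}$ follows at once. The whole proof therefore hinges on compatibility of the written $b_k$.

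To establish it, I would use snapshot containment on $A$. The views $A_k$ of the processes that reach Line~\ref{line:gca:4} form a chain under $\subseteq$. It thus suffices to show a monotonicity lemma: if $V\subseteq W$ are two snapshot views of $A$, then $\bigsqcup V^{\mathsf{co}}$ and $\bigsqcup W^{\mathsf{co}}$ are compatible. Since traces form a coherent domain (pairwise compatible finite sets are compatible, a property of Mazurkiewicz traces from~\cite{aalbersberg1988theory} that I would invoke), it is enough to show that the set $V^{\mathsf{co}}\cup W^{\mathsf{co}}$ is pairwise compatible. Its LUB then exists and bounds both $\bigsqcup V^{\mathsf{co}}$ and $\bigsqcup W^{\mathsf{co}}$. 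Pairwise compatibility of all views in the chain then yields compatibility of the whole finite set of written $b_k$, again by coherence.

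For the lemma I would proceed entrywise. Every $a_k^{\mathsf{co}}$ (in either view) is a prefix of $a_k$. It is also a prefix of every $a_l$ in that view that is incompatible with $a_k$. Moreover, passing from $V$ to $W$ can only add incompatible partners, so $a_k^{\mathsf{co}}(W)\le a_k^{\mathsf{co}}(V)$ for entries already in $V$. Now take $x=a_k^{\mathsf{co}}(\cdot)$ and $y=a_l^{\mathsf{co}}(\cdot)$ from the two views, and split into cases. If $a_k$ and $a_l$ are compatible, then $x\le a_k$ and $y\le a_l$ give a common upper bound. If $a_k$ and $a_l$ are incompatible and both lie in $W$ (which contains $V$), then both $a_k^{\mathsf{co}}(W)$ and $a_l^{\mathsf{co}}(W)$ are prefixes of $a_k$ and also of $a_l$. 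The remaining subcase is when one of $x,y$ was computed in $V$ and the other in $W$; there I would use $a_k^{\mathsf{co}}(W)\le a_k^{\mathsf{co}}(V)\le a_k$ together with $a_l^{\mathsf{co}}(W)\le a_k$ to get the common bound $a_k$. Entries equal to $\bot$ are discarded by the convention on $\bot$.

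The main obstacle I expect is exactly this mixed-view case. There the two entries are computed with respect to different sets of incompatible partners, and I must make sure the chosen common upper bound really dominates both. A secondary subtlety is the appeal to coherence (pairwise compatibility implies compatibility) in the trace monoid. If that is not available as stated, I would instead exhibit an explicit bound by induction along the chain of views, using that each $\bigsqcup W^{\mathsf{co}}$ is built only from GLBs and LUBs of entries of $W$.
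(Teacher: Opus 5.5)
Your proposal is correct and follows essentially the same route as the paper: you reduce Convergence to compatibility of the traces written in $B$, then use snapshot containment on $A$ to show that an incompatible pair $a_k^{\mathsf{co}}, a_l^{\mathsf{co}}$ taken from two views could only arise if each of $a_k, a_l$ were absent from the other's view. The paper argues this by contradiction, while you argue it directly through your mixed-view case, which goes through as you describe. Your explicit appeal to coherence of the trace order (pairwise compatible finite sets are compatible) is also used implicitly by the paper, both when it passes from incompatible LUBs to an incompatible pair of entries and when it concludes that all entries of $B$ share a common upper bound.
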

\begin{proof}
    From Line~\ref{line:gca:6}, each output trace is the GLB of a subset of (non-$\bot$) entries in $B$. 
    Thus, to prove that the output traces are compatible, it suffices to show that all the traces written into $B$ are compatible.

    For the sake of contradiction, suppose two processes $i$ and $j$ write non-compatible traces $b_i$ and $b_j$ in $B$.
    According to Line~\ref{line:gca:4}, $b_i= \bigsqcup A_i^\mathsf{co}$ and  $b_j= \bigsqcup A_j^\mathsf{co}$, where $A_i^\mathsf{co}$ and $A_j^\mathsf{co}$ are compatible sets by construction. 
    Thus, for $b_i$ and $b_j$ to be non-compatible there exist $a_k^\mathsf{co} \in A_i^\mathsf{co}$ and $a_l^\mathsf{co} \in A_j^\mathsf{co}$ which are not compatible.
    From Line~\ref{line:gca:3}, 
    \begin{align*}
    a_k^{\mathsf{co}} & = \bigsqcap(\{a_k\}\cup\{a_p \in A_i: \neg\comp{a_p, a_k} \}) \\
    a_l^{\mathsf{co}} & = \bigsqcap(\{a_l\}\cup\{a_p \in A_j: \neg\comp{a_p, a_l} \})
    \end{align*}
    Thus, $a_k \in A_i$ and $a_l \in A_j$ must themselves be non-compatible.
    Moreover, for $a_k^{\mathsf{co}}$ and $a_l^{\mathsf{co}}$ to not be compatible, $a_l \notin  \{a_p \in A_i : \neg\comp{a_p, a_k} \}$.
    This implies $a_l \notin A_i$.
    Similarly, ${a_k \notin \{a_p \in A_j : \neg\comp{a_p, a_l} \}}$, leading to  $a_k \notin A_j$.

    Hence, $a_k \in A_i \setminus A_j$ and $a_l \in A_j \setminus A_i$, implying that $A_i \nsubseteq A_j$ and $A_j \nsubseteq A_i$. 
    A contradiction, as $A$ is an atomic snapshot object and its views must be related by containment. 
    Therefore, all the traces in $B$ are compatible.
    That is, there exists a trace $z$ such that for any trace $b_l$, $b_l \leq z$.
    As for any process $i$, $b_i \leq z$ and $\bigsqcap\{b_k : (b_k,c_k)  \in B_i \land b_k \neq \bot \land c_k=\True\} \leq z$,
    the set of output traces, as computed in Line~\ref{line:gca:6}, is also compatible.
\end{proof}

\begin{lemma}[Common Prefix]
\label{lemma:GCA_common_prefix}
Let $\alpha = \bigsqcap_{i\in P} s_i$ be the common prefix of all the input traces.
For any process $j \in P_r$, if $j$ adopts or commits trace $\beta_j$, then $\alpha \leq \beta_j$.
\end{lemma}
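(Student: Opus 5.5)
The plan is to show that $\alpha$ is a prefix of every non-$\bot$ value that can appear in the computation of $\beta_j$, and then push this through the GLB/LUB operations of Lines~\ref{line:gca:3}--\ref{line:gca:6}. The key monotonicity facts are standard for the prefix order on traces: if $\alpha \leq x$ for every $x$ in a nonempty set $X$ of traces, then $\alpha \leq \bigsqcap X$ (by maximality of the GLB); and if moreover $X$ is compatible, then $\alpha \leq x \leq \bigsqcup X$ for any $x \in X$.

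\textbf{Step 1: entries of $A$.} Every non-$\bot$ entry $a_k$ of any snapshot $A_i$ is the input $s_k$ of some participant $k\in P$, written at Line~\ref{line:gca:1}. By definition of $\alpha=\bigsqcap_{k\in P}s_k$, we get $\alpha\le a_k$.

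\textbf{Step 2: entries of $A_i^{\mathsf{co}}$.} Each non-$\bot$ $a_k^{\mathsf{co}}$ is the GLB of $a_k$ together with other non-$\bot$ entries of $A_i$. Note that $\bot$ entries are always compatible with everything under the extended conventions, so they are never included in the GLB. Since $\alpha$ lower-bounds every trace in that set, Step~1 gives $\alpha\le a_k^{\mathsf{co}}$. The set $A_i^{\mathsf{co}}$ is nonempty after removing $\bot$, because $a_i=s_i$ appears in $A_i$ by self-inclusion of the snapshot. It is also compatible, as observed in the Convergence lemma. Hence the candidate $b_i=\bigsqcup A_i^{\mathsf{co}}$ written at Line~\ref{line:gca:4} satisfies $\alpha\le a_i^{\mathsf{co}}\le b_i$.

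\textbf{Step 3: the output.} In Line~\ref{line:gca:6}, $\beta_j$ is either $b_j$, which extends $\alpha$ by Step~2, or the GLB of a nonempty set of non-$\bot$ candidates $b_k$ with $c_k=\True$. Each such $b_k$ was written by some participant $k$ and extends $\alpha$ by Step~2, so the GLB extends $\alpha$ too. This holds whether $j$ commits or adopts, since the returned trace is $\beta_j$ in both cases.

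\textbf{Expected obstacle.} The main subtlety is the handling of $\bot$ in the extended $\bigsqcap$ convention, where $\bigsqcap S=\bot$ if $\bot\in S$. I must check that in Line~\ref{line:gca:3} no $\bot$ enters the GLB set defining a non-$\bot$ $a_k^{\mathsf{co}}$, and that entries with $a_k=\bot$ yield $a_k^{\mathsf{co}}=\bot$, which the LUB ignores. I must also check that Line~\ref{line:gca:6} filters $b_k\ne\bot$. Nonemptiness via self-inclusion of the snapshots settles the rest.
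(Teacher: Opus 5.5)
Your proposal is correct and follows essentially the same route as the paper: establish $\alpha \le a_k$ for every written input, then push this lower bound through the GLBs of Line~\ref{line:gca:3}, the LUB of Line~\ref{line:gca:4}, and the GLB or own-candidate choice of Line~\ref{line:gca:6}. Your explicit handling of the $\bot$ conventions, and of the nonemptiness of $A_i^{\mathsf{co}}$ via snapshot self-inclusion (without which $\bigsqcup$ would default to $\epsilon$), supplies details that the paper's proof leaves implicit.
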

\begin{proof}
    A process $i \in P$ writes its input trace $s_i$ in entry $A[i]$.
    Then it takes a snapshot $A_i=(a_1, \ldots, a_n)$ of $A$ in Line~\ref{line:gca:2}.
    At Line~\ref{line:gca:3}, $a_k^{\mathsf{co}}$ is set to the value $\bigsqcap(\{a_k\} \cup \{a_j : \neg\comp{a_j, a_k} \})$.
    By assumption, $\alpha \leq a_k$, for any participating process $k \in P$.
    Consequently, $\alpha \leq a_k^{\mathsf{co}}$ and $\alpha \leq \bigsqcup A_i^\mathsf{co}$.
    Then, in Line~\ref{line:gca:5}, process $i$ writes $A_i^\mathsf{co}$ in entry $B[i]$.
    Further, it takes a snapshot $B_i=((b_1,c_1), \ldots, (b_n,c_n)$ of $B$.
    From what precedes, for every entry $b_k$ in $B_i$, $\alpha \leq b_k$.
    It follows that the output trace $\beta_i$ computed in Line~\ref{line:gca:6} also satisfies $\alpha \leq \beta_i$.
\end{proof}

\begin{lemma}[Weak Agreement]
\label{lemma:GCA_weak_agreement}
    If all input traces are equal, then no process adopts.
\end{lemma}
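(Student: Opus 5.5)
The plan is to show that when every participant proposes the same trace $s$, each returning process $i$ has its weak-agreement flag $w_i$ raised. Line~\ref{line:gca:8} then returns the boolean $\True$ regardless of the second disjunct. Commitment is thus reduced to a purely syntactic check of what the snapshots $A_i$ and $B_i$ can contain.

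First, consider the entries of $A_i$. The only values ever written to $A$ (Line~\ref{line:gca:1}) are participants' inputs, all equal to $s$. Hence every non-$\bot$ entry $a_k$ of $A_i$ equals $s = s_i$, and the first conjunct of $w_i$ holds.

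Second, consider the entries of $B_i$. Every process $k$ that writes to $B$ first took a snapshot $A_k$ whose non-$\bot$ entries are all $s$ (and $a_k=s$ is non-$\bot$ by self-inclusion). These entries are pairwise compatible, so for each $a_l\neq\bot$ the set in Line~\ref{line:gca:3} is just $\{a_l\}$, giving $a_l^{\mathsf{co}}=s$. For $a_l=\bot$, the same set $\{a_l\}\cup\{\ldots\}$ yields $\bot$ by the extended GLB convention. Thus $A_k^{\mathsf{co}}=A_k$, and by the $\bot$-conventions $\bigsqcup A_k^{\mathsf{co}}=\bigsqcup\{s\}=s$. Every pair written is therefore $(s,\True)$, so every non-$\bot$ entry $b_k$ of $B_i$ equals $s$, and the second conjunct holds.

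The main obstacle is this last step: checking carefully that the $\bot$-extended operators behave as claimed. Specifically, $\bot$ entries must not perturb $A_k^{\mathsf{co}}$ in a way that breaks $A_k=A_k^{\mathsf{co}}$, and $\bigsqcup$ must ignore $\bot$. Both follow from the conventions stated at the start of this appendix section, since $\bigsqcap$ of a set containing $\bot$ is $\bot$ and $\bigsqcup$ and $\comp{}$ are computed on non-$\bot$ elements. With $w_i=\True$, every $i\in P_r$ returns $(\beta_i,\True)$, i.e.\ no process adopts. As a side remark, $\beta_i=\bigsqcap\{s,\ldots\}=s$, since at least $b_i=(s,\True)$ is present in $B_i$.
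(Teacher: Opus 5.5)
Your proof is correct and follows the same route as the paper. All entries of $A$ are $s$, so every writer to $B$ writes $(s,\True)$; hence $w_i=\True$ and every returning process commits $s$ at Line~\ref{line:gca:8}. Your explicit check of the $\bot$-conventions (that $\bot$ entries never enter the incompatibility sets of Line~\ref{line:gca:3} and are ignored by $\bigsqcup$) fills in a detail the paper glosses over when it writes that every entry of $A_j$ equals $s$.
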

\begin{proof}
    Pick some participating process $j \in P$.
    We have that for any entry $a_k \in A_j$, $a_k =s$. 
    Then, $A_j = A_j^\mathsf{co}$ and $\bigsqcup A_i^\mathsf{co} = s$. 
    Thus, process $j$ writes $(s,\True)$ in $B[j]$.
    
    Now, let $i \in P_r$ be a process that returns from its call.
    From what precedes, for every non-$\bot$ entry in $B_i$, this entry equals $(s,\True)$. 
    Thus, $\beta_i = s$ and $w_i = \True$. 
    According to Line~\ref{line:gca:8}, process $i$ commits trace $s$.
\end{proof}

\hypertarget{lemmaA7}{}
\begin{lemma}[Validity]
\label{lemma:GCA_validity}
    Every output trace contains only operations from the input traces.
\end{lemma}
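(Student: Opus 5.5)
The plan is to trace every trace that Algorithm~\ref{alg:GCAFV2} computes back to the input traces, using only the fact recalled in Section~\ref{sec:traces} that GLBs and LUBs are built from the operations of their arguments: $\ops{\bigsqcap S} \subseteq \bigcap_{s \in S} \ops{s}$ and $\ops{\bigsqcup S} \subseteq \bigcup_{s \in S} \ops{s}$. With the $\bot$-extended operators, a $\bot$ argument contributes no operations, and an empty LUB is $\epsilon$. Let $U = \bigcup_{j \in P} \ops{s_j}$. I will show, one line at a time, that every trace manipulated by a process lies in $\ops{\cdot} \subseteq U$.

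First, every non-$\bot$ entry $a_k$ of a snapshot $A_i$ (Line~\ref{line:gca:2}) is some $s_k$, since only Line~\ref{line:gca:1} writes to $A[k]$, and it writes the input of a participant $k \in P$. Second, each $a_k^{\mathsf{co}}$ (Line~\ref{line:gca:3}) is a GLB of a set containing $a_k$. Hence $\ops{a_k^{\mathsf{co}}} \subseteq \ops{a_k} \subseteq U$, or $a_k^{\mathsf{co}} = \bot$. Third, $b_i = \bigsqcup A_i^{\mathsf{co}}$ (Line~\ref{line:gca:4}) satisfies $\ops{b_i} \subseteq \bigcup_k \ops{a_k^{\mathsf{co}}} \subseteq U$; this also covers the case of an empty LUB.

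Finally, consider $\beta_i$ at Line~\ref{line:gca:6}. In the first branch, $\beta_i$ is the GLB of a nonempty set of non-$\bot$ entries $b_k$ of $B_i$, each written at Line~\ref{line:gca:4} by some participant. So $\ops{\beta_i} \subseteq \ops{b_k} \subseteq U$ for any such $k$. In the second branch, $\beta_i = b_i$, which $i$ itself wrote before its snapshot, so it is non-$\bot$ and already covered. Line~\ref{line:gca:8} returns $\beta_i$ unchanged, which gives Validity for every $i \in P_r$.

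There is no real obstacle here. The only point that needs care is the treatment of $\bot$: I will check that the first branch only ranges over non-$\bot$ entries, and that $b_i \neq \bot$ in $B_i$ by self-inclusion of the snapshot. This ensures no output is $\bot$ and that the operation-containment facts apply.
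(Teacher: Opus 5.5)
Your proposal is correct and follows the same route as the paper, which likewise just invokes $\ops{\bigsqcap S} \subseteq \bigcap_{s\in S}\ops{s}$ and $\ops{\bigsqcup S} \subseteq \bigcup_{s\in S}\ops{s}$ and traces these through the pseudo-code of Algorithm~\ref{alg:GCAFV2}. You spell out what the paper leaves implicit: the line-by-line propagation from the inputs to $\beta_i$, and the handling of $\bot$ (non-$\bot$ entries of $A$ are inputs, $b_i\neq\bot$ in $B_i$ by self-inclusion, and the first branch ranges only over non-$\bot$ candidates).
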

\begin{proof}
    According to Section~\ref{sec:traces}, given some set of traces $S$, $\ops{\bigsqcap S} \subseteq \bigcap_{s \in S} \ops{s}$.
    Similarly, $\ops{\bigsqcup S} \subseteq \bigcup_{s \in S} \ops{s}$.
    As a consequence, in light of the pseudo-code of Algorithm~\ref{alg:GCAFV2}, every output trace is built with commands from the input traces.
\end{proof}

From what precedes, we may now establish the main result.

\GCAWorks*

\begin{proof}
    In light of Lemmas~\hyperlink{lemmaA2}{\ref*{lemma:GCA_adoption}}-\hyperlink{lemmaA7}{\ref*{lemma:GCA_validity}}, all the properties of GCA hold.
    It is also clear from its pseudo-code that Algorithm~\ref{alg:GCAFV2} is wait-free.
\end{proof}

\end{document}